\theoremstyle{definition}
\newtheorem{theorem}{Theorem}
\newtheorem{lemma}[theorem]{Lemma}
\newtheorem{remark}[theorem]{Remark}
\newtheorem{example}{Example}
\newcommand{\E}{\mathbb E}
\newcommand{\Q}{\mathbb Q}
\newcommand{\D}{\mathrm{d}}
\newcommand{\F}{\mathcal F}
\newcommand{\e}{\mathrm{e}}
\renewcommand{\cite}{\citet}
\begin{document}
\title{High moment variations and their application}
\author{Geon Ho Choe\footnote{
Department of Mathematical Sciences, KAIST, Daejeon 305-701, Korea, choe@euclid.kaist.ac.kr}
and Kyungsub Lee\footnote{Correspondence author, Department of Mathematical Sciences, KAIST, Daejeon 305-701, Korea, Tel: +82-42-350-2725, Fax: +82-42-350-2710, klee@euclid.kaist.ac.kr}
}

\date{}
\maketitle

\begin{abstract}
We propose a new method of measuring the third and fourth moments of return distribution based on quadratic variation method when the return process is assumed to have zero drift.
The realized third and fourth moments variations computed from high frequency return series are good approximations to corresponding actual moments of the return distribution.
An investor holding an asset with skewed or fat-tailed distribution is able to hedge the tail risk by contracting the third or fourth moment swap under which the float leg of realized variation and the predetermined fixed leg are exchanged.
Thus constructed portfolio follows more Gaussian-like distribution and hence the investor effectively hedge the tail risk.
\end{abstract}

\section*{Acknowledgement}
This research was supported by Basic Science Research Program through the National Research Foundation of Korea(NRF) funded by the Ministry of Education(NRF-2011-0012073).

\newpage

\section{Introduction}

We define the third and fourth moment variations of financial asset return process and examine the properties of the variations.
The third moment variation is defined to be a quadratic covariation between squared return and return process and
the fourth moment variation is defined to be a quadratic variation of squared return process.
It is demonstrated that the variations can be used as approximations for the third and fourth moments of the return distribution, which are generally hard to measure, under certain conditions.

Skewness, the third standardized moment, has long been an important topic in financial study and
there are numerous evidences that stock return distributions are left skewed under both physical and risk-neutral probability.
\cite{KrausLitzenberger} extended the capital asset pricing model to incorporate skewness preference.
\cite{HarveySiddique1999} developed a new methodology for estimating time-varying conditional skewness.
\cite{HarveySiddique} showed that conditional skewness explain the cross-sectional variation of expected returns across assets.
In \cite{BakshiKapadiaMadan}, cubic and quartic contracts are defined to measure risk-neutral skewness and kurtosis.
\cite{Christoffersen} developed a GARCH type option pricing model with inverse Gaussian innovations to incorporate conditional skewness.
\cite{Neuberger2012} and \cite{Kozhan} proposes a new definition of realized third moment which satisfies the aggregate property to estimate of the true third moment of long-horizon returns.

In addition, kurtosis, the fourth standardized moment, also plays an crucial role in financial studies
and it is well known that a financial asset return distribution has larger kurtosis than the Gaussian distribution.
\cite{Brooks} proposed a new model for autoregressive kurtosis and showed the evidence of the presence of autoregressive conditional kurtosis.
\cite{LeonRubioSerna} also indicated a significant presence of conditional skewness and kurtosis.

Our approaches to examining the high moments of return distribution are based on quadratic variation methods
and the motivation stems from the previous researches on the quadratic variation of asset return process.
The quadratic variation of an asset return (or price) process, the limit of the sum of squared returns, plays a central role in measuring the variance of return, since the expectation of the quadratic variation of return is considered as an estimator of the variance of the return distribution.
The realized (quadratic) variation, which refers to the finite sum of squared return computed from high-frequency return time series, is an approximation to the quadratic variation.
Thus, the realized variation of return is an efficient estimator of the variance of return distribution under physical probability.
We extend this idea to link newly defined high moment variations and the corresponding quantities of the return distribution.

In the previous studies about theory and empirical analysis on high-frequency data,
\cite{ABDL} showed that when underlying process is a semimartingale, the realized variance is a consistent estimator of quadratic variation.
\cite{Barndorff2002a} derived the asymptotic distribution of the realized volatility error under stochastic volatility models.
\cite{Barndorff-Nielsen2004} and \cite{Barndorff-Nielsen2006} introduced realized bipower variation which is robust to rare jumps in estimating integrated variance and tested jump in asset price process.
\cite{Hansen} examined the errors of the realized variance under the presence of market microstructure noise.
\cite{MyklandZhang} studied local-constancy approximation of variance on which econometric literatures of high frequency data often rely.

One of the interesting properties of the quadratic variation of the return is that the risk-neutral expectation of the quadratic variation is synthesized by a continuum of European option prices.
More precisely, the expectation is represented by an integration formula whose integrand is composed of weighted option prices.
For detailed information about such replication techniques, see \cite{CarrMadan} and \cite{Britten-Jones}.
Thus, one can compare the difference between the variance of return under the physical probability and the risk-neutral probability
by computing the realized variance and the synthesized option value.
Using this method, \cite{CarrWu} show that there exists a variance risk premium which implies that the risk-neutral variance is generally larger than realized variance.
The reader may refer to \cite{BakshiMadan} where the relationship between the variance risk-premium and the higher degree quantities of the return distribution is explained.
\cite{Todorov} investigate the role of jump in explaining the variance risk-premium.
See \cite{Zhang} for the relationship between the realized and risk-neutral volatilities.

The synthesized option value is often referred to as a fixed variance swap rate.
The fixed variance swap rate is the value in which investors are willing to pay to protect their wealth from variance risk.
For more information about variance swap, see \cite{DDKZ}.

We show that the risk-neutral expectations of the third and fourth moment variations are composed of synthesized option part of out-of-the-money (OTM) European options and jump correction parts.
Our empirical study shows that the option parts of expected third and fourth moment variation are good approximations to expected third and fourth moments of the return distribution, respectively.

The fact that the realized high moments variations, computable from high-frequency data, are mimicking the moments of return distribution, ones hard to compute from data, is important to hedge fat-tail risk.
To hedge the risk, we propose a new kind of variation swap.
The swap is similar with the skew swap introduced in \cite{Neuberger2012} and \cite{Kozhan},
but the floating leg is defined to be the realized third moment variation of the asset return over a fixed time period.
The third moment variation swap can be used to hedge shortfall risk of a financial asset with heavy left tail return distribution.
Under the third moment variation swap, counterparties exchange the realized third moment variation for a predetermined strike price.
The portfolio consisting of the skewed underlying asset and the third moment swap has more Gaussian-like symmetric return distribution than the original asset so that one can hedge extreme shortfall risk.

The fourth moment variation swap can be applied to an asset with leptokurtic return distribution to hedge fat-tail risk.
Similarly, the portfolio consisting of the fat-tailed underlying asset and the fourth moment variation swap has more Gaussian-like thinner tail return distribution.
We employ simulations and empirical studies to examine the performance of the variation swap.

The remainder of the paper is organized as follows.
Section~\ref{sect:moment} introduce the third and fourth moment variations.
In Section~\ref{sect:synthesizing}, we construct a mathematical framework to show that the risk-neutral expectations of third and fourth moment variations are represented by European option prices.
In Section~\ref{sect:empirical}, we present empirical studies on S\&P 500 index returns and options data.
Five-minute high frequency data of S\&P 500 index series is used to compute realized quadratic variation and covariation.
Employing some filtering methods on the index option data, we also calculate the risk-neutral expectations of quadratic variations.
In Section~\ref{sect:swap}, we explain the variation swaps to deal with tail risk and show interesting examples and empirical studies.
Section~\ref{sect:concl} concludes the paper.

\section{High order moment variations}\label{sect:moment}

Throughout this paper, we introduce a probability space with a time index set $[0,T^*]$ for some fixed $T^*>0$.
Let $(\Omega, \mathcal F, \mathbb P)$ be a complete probability space
with a filtration $\{\mathcal F_t\}_{t\in [0, T^*]}$ where $\mathcal F_{T^*} = \mathcal F$.
The measure $\mathbb P$ is the physical probability measure.
All processes introduced in this paper are defined on the probability space and those processes are adapted to the filtration.

Let $S$ denote a semimartingale asset price process and $F$ be a corresponding futures price process with maturity $0<T\leq T^*$.
Assume that there exists a risk-neutral measure $\Q$ under which every discounted asset price process is a martingale.
Also we assume $F_t = \E^{\Q} [ S_T | \mathcal F_t]$.
Define the log-return process
$$R_t = \log S_t - \log S_0.$$
We are interested in the high order moment properties of return $R_T$ over a fixed time period $[0, T]$, for example, $T=1$ or $30$ days, and the analysis are based on quadratic variation methods.

The quadratic variation process of a semimartingale $X$ is defined by
$$ [X]_t = X_t^2 - 2 \int_0^t X_u \D X_u.$$
The quadratic covariation process of $X$ and $Y$ is defined by
$$ [X,Y]_t = X_t Y_t - \int_0^t X_u \D Y_u - \int_0^t Y_u \D X_u.$$
Note that for a sequence of partition $\pi_n$ ranged over $[0,t]$, we have
$$ [X,Y]_t =  \lim_{||\pi|| \rightarrow 0} \sum_{i} ( X_{t_i} - X_{t_{i-1}} )( Y_{t_i} - Y_{t_{i-1}} ) \quad \textrm{in probablity.} $$
For the details, see~\cite{Protter}.

The realized quadratic variation of return, $[R]_t$, is an unbiased estimator of the variance of the log return $R_t$ under certain conditions \citep{ABDL}.
Thus, the realized variation become a conventional measure of the actual variance or the return.
We now define the analogous third moment covariation and fourth moment variation by
\begin{align*}
[R,R^2] \quad &\textrm{ the third moment covariation} \\
[R^2] \quad &\textrm{ the fourth moment variation}.
\end{align*}
The third moment covariation is the quadratic covariation between return and squared return processes.
The fourth moment covariation is the quadratic variation of squared return process.

It will be shown that the newly defined variations are closely related to the actual moments.
In the later, we will demonstrate that the linear transform of the expectations of the third moment covariation $[R,R^2]_T$ and fourth moment variation $[R^2]_T$ approximate the third and fourth moments of the return distribution over $[0,T]$, respectively.
Especially, such as in continuous-time stochastic volatility models, by assuming the drift term of return process is zero, the expected moment variations and actual moments are in exact linear relationship.
This is the reason that $[R,R^2]$ is called the third moment covariation and $[R^2]$ is called the fourth moment variation.

Consider a partition that $0 = t_0< \cdots < t_N =T$.
For the notational simplicity, let $R_i = R_{t_i}$.
Then we approximate the following quadratic variations and covariations by
\begin{align*}
[R]_T &\approx  \sum_{i=1}^{N} (R_{i} - R_{i-1})^2 \\
[R,R^2]_T &\approx  \sum_{i=1}^{N} (R_{i} - R_{i-1})(R^2_{i} - R^2_{i-1}) \\
[R^2]_T &\approx  \sum_{i=1}^{N} (R^2_{i} - R^2_{i-1})^2.
\end{align*}
Since $R$ and $R^2$ are semimartingales, the right hand sides of the above equations converge to the corresponding quadratic variations and covariation in probability as the mesh size of partition goes to zero.
The finite sums are called the realized (co)variations
and the realized variations are consistent estimators of the corresponding quadratic variations.
We will use the realized (co)variations as proxies for the third and fourth moment of physical distribution of $R_T$.

The each term in the finite summation of the realized variations consists of the powers of log contract.
We can rewrite
$$ (R^2_{i-1} - R^2_{i})(R_{i-1} - R_{i}) = (\Delta R_{i-1,i})^3 + 2R_{i-1} (\Delta R_{i-1,i})^2  $$
and
$$ (R^2_{i-1} - R^2_{i})^2 = (\Delta R_{i-1,i})^4 + 4R_{i-1}(\Delta R_{i-1,i})^3 + 4R_{i-1}(\Delta R_{i-1,i})^2 $$
where $\Delta R_{i-1,i} = R_i -R_{i-1}$.
The term in the third moment covariation is replicated by holding one cubic log contract and $2R_{i-1}$ square log contracts over the period $[t_{i-1}, t_{i}]$.
Similarly, the term in the fourth moment variation is replicated by holding one quartic log contract, $4R_{i-1}$ cubic log contracts and $4R_{i-1}$ square log contracts over the period $[t_{i-1}, t_{i}]$.

\begin{example}\label{Ex:Heston}
Assume that the price process follows Heston's stochastic volatility model.
Then
\begin{align*}
\D S_t &= \mu S_t \D t + \sqrt{V_t} S_t \D W_1(t)\\
\D V_t &= \kappa (\theta -V_t) \D t + \sigma\sqrt{V_t}\D W_2(t)\\
[W_1, W_2]_t &= \rho \D t.
\end{align*}
In addition,
\begin{align*}
\D R_t = \left( \mu - \frac{1}{2}V_t \right) \D t + \sqrt{V_t} \D W_1(t), \quad \D [R]_t = V_t \D t.
\end{align*}

Note that by assuming the drift in the return process is zero, we have
\begin{equation}\label{Eq:third}
[R, R^2]_T = 2\int_0^T R_t V_t \D t
\end{equation}
and
\begin{equation}\label{Eq:fourth}
[R^2]_T = 4\int_0^T R^2_t V_t \D t.
\end{equation}
In addition,
\begin{align*}
\E[R^3_T] &= 3 \E \left[ \int_0^T  R_t^2 \D R_t + 3 \E \int_0^t R_t \D [R]_t \right] \\
&= 3 \E \left[  \int_0^T R_t V_t \D t \right]
\end{align*}
and
\begin{align*}
\E[R^4_T] &= 4 \E \left[ \int_0^T  R_t^3 \D R_t \right] + 6 \E \left[ \int_0^t R^2_t \D [R]_t \right] \\
&= 6 \E \left[ \int_0^T R^2_t V_t \D t \right].
\end{align*}
By comparing above equations with Eqs.\eqref{Eq:third} and \eqref{Eq:fourth}, we conclude that in stochastic volatility models with the absence of drift in the return process,
the relation between the expected moment variations and the actual moments are linear.
More precisely,
$$\E[R^3_T] = \frac{3}{2} \E[[R^2,R]_T], \quad \E[R^4_t] = \frac{3}{2} \E [[R^2]_T].$$

To examine the bias caused by the drift term, we perform a simulation study with parameter setting $\mu = 0.05$, $\kappa = 4, \theta = 0.3, \sigma= 0.4$ and $\rho = -0.9$ with $T=1$ day.
In the left of Figure~\ref{Fig:Heston_Converge}, we plot the dynamics of the sample third moment (dotted line) and the dynamics of the sample mean of realized third moment variation $[R^2, R]_T$ multiplied by 1.5.
In spite of the drift term, the discrepancy between the limits of two quantities is relatively small.
Similarly, in the right of Figure~\ref{Fig:Heston_Converge}, the convergence of the sample fourth moment (dotted) and the sample mean of $1.5[R^2]_T$ is represented.
The sample third moment is $-5.93\times10^{-4}$ and the sample mean of $1.5[R^2,R]$ is $-6.01\times10^{-4}$.
The sample bias is about $1.4\%$.
The sample fourth moment is $2.76\times10^{-4}$ and the sample mean of $1.5[R^2]$ is $2.97\times10^{-4}$.
In this case, the sample bias is about $7.7\%$.

\begin{figure}
\begin{center}
\includegraphics[width=7.5cm]{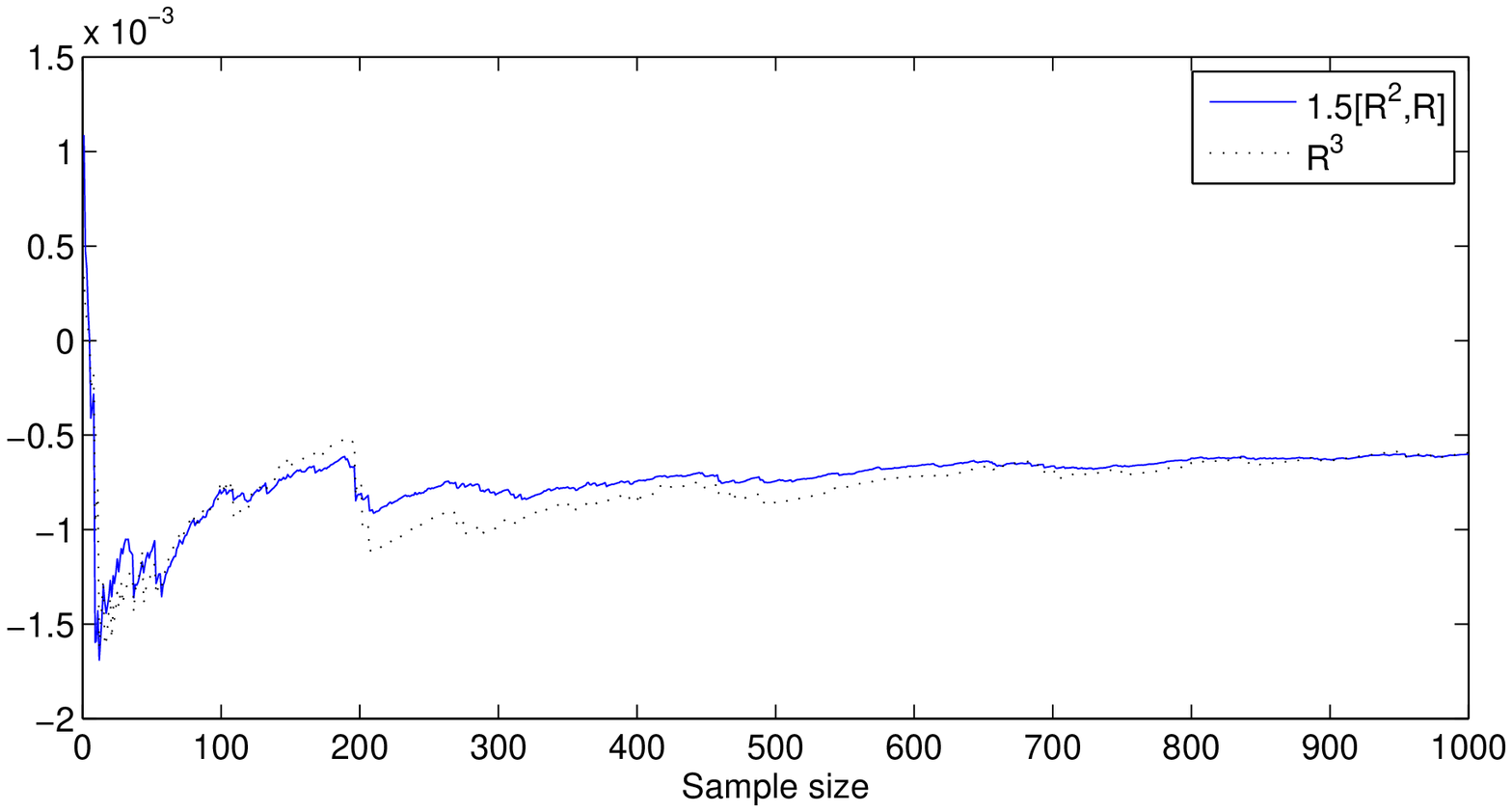}
\includegraphics[width=7.5cm]{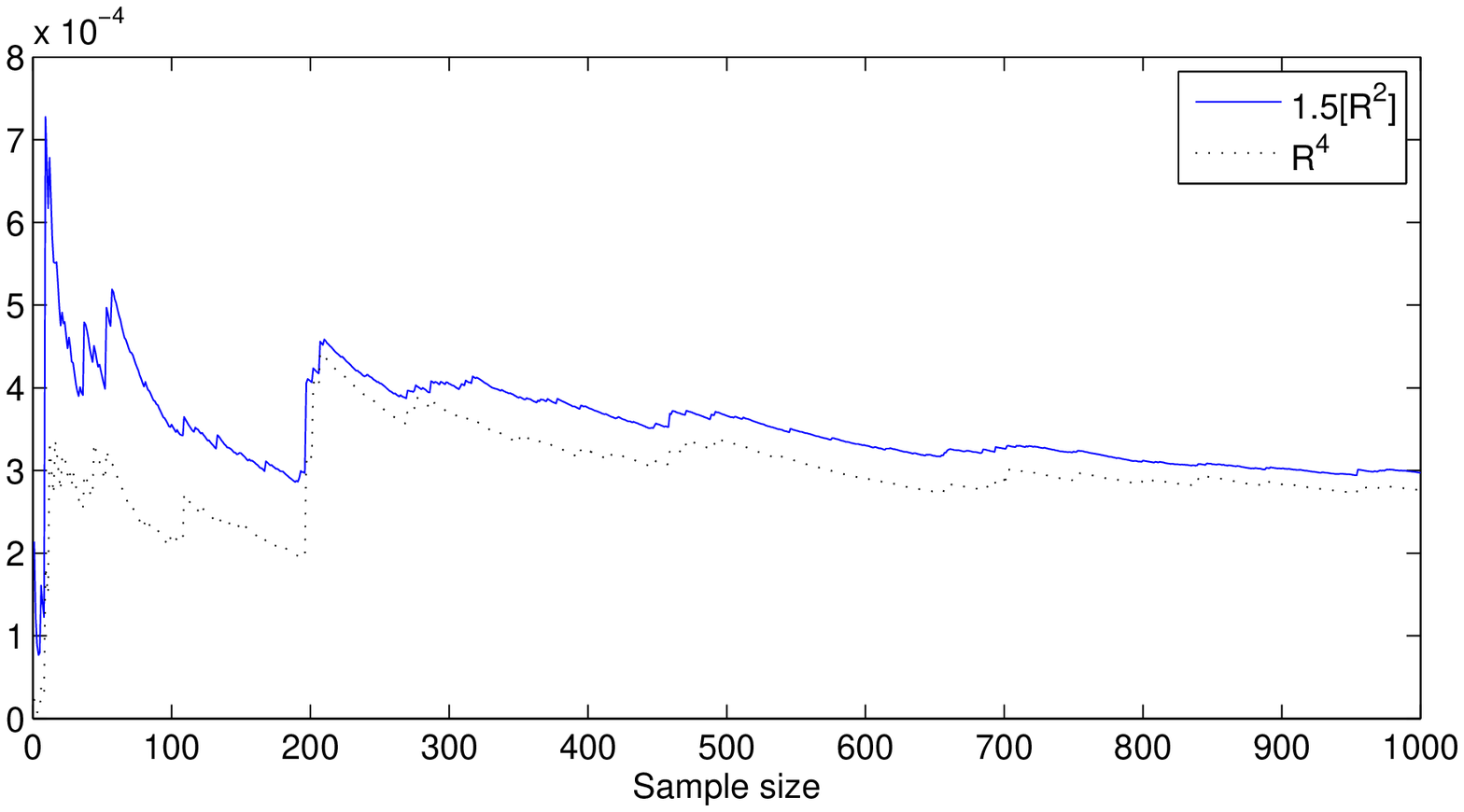}
\end{center}
\caption{Convergence of the sample third moment (dotted) and $1.5[R^2,R]$ (dashed) in the left and convergence of the sample fourth moment (dotted) and $1.5[R^2]$ (dashed) in the right}\label{Fig:Heston_Converge}
\end{figure}

\end{example}

\section{Synthesizing variations with options}\label{sect:synthesizing}

To examine the option implied third and fourth moment variations,
we derive integral formulas for the risk-neutral expectations of $[R,R^2]_T$ and $[R^2]_T$ based on European option prices.
With derived formulas, we can investigate the risk premia associated with high order moments of returns and compare the moment variations and actual moments under risk-neutral probability.

For simplicity, we assume that the instantaneous interest rate, $r$, is constant.
We define
$$
\phi\left( x, K \right)=\left\{
      \begin{array}{ll}
        p\left( x, K \right), & 0\leq K \leq \e^{rT}x, \\
        c\left( x, K \right), & \e^{rT}x < K < \infty.
      \end{array}
    \right.
$$
where $c$ and $p$ are European call and put option prices with current spot price $x$ and strike $K$, respectively.

Let $L^{2}_{\mathbb Q, [Y]}([s,t]\times\Omega)$ denote the space of adapted stochastic process $X$ such that
$$\mathbb E^{\Q} \left[ \left. \int_{s}^{t}  X^2_u \D [Y]_u  \right| \F_s \right] < \infty. \quad \textrm{a.s.}$$
Under the condition, we guarantee that the stochastic integral of $X$ with respect to a $\Q$-martingale $Y$ is a $\Q$-martingale.
For the detailed information, consult~\cite{Kuo}.

To derive the integral formula, we need the following technical conditions:
\begin{equation}\label{Eq:condition}
\frac{1}{F}, \frac{R+1}{F}, \frac{R^2+ 2R +2}{F} \in L^{2}_{\mathbb Q}([0,T]_{[F]}\times\Omega)
\end{equation}
In the next theorem, the risk-neutral expectations of the variations are represented by the sums of integral whose integrands are weighted European option prices and jump correction parts.
For the risk-neutral expectations of the high moment variations, we have analogous results with \cite{CarrWu}.

\begin{theorem}\label{Thm:representation}
Under the condition~\eqref{Eq:condition}, the risk-neutral expectations of variations are represented as sums of synthesized option prices and jump correction terms. More precisely,
\begin{align}
\E^{\Q}\left[\, [R]_T \right] ={}& 2e^{rT}  \int_{0}^{\infty} \frac{1}{K^2} \phi(S_0, K) \D K + J_2\\
\E^{\Q}\left[\, [R, R^2]_T \right]  ={}& 4e^{rT} \int_{0}^{\infty} \left(\log\frac{K}{S_0}\right) \frac{1}{K^2} \phi(S_0, K)  \D K + J_3\label{Eq:tm}\\
\E^{\Q}\left[\, [R^2]_T \right] ={}& 8e^{rT} \int_{0}^{\infty} \left(\log\frac{K}{S_0}\right)^2 \frac{1}{K^2} \phi(S_0, K)  \D K +J_4.\label{Eq:fm}
\end{align}
The jump correction terms $J_2, J_3$ and $J_4$ are of order $O((\Delta R_t)^3)$.
\end{theorem}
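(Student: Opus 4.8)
The plan is to handle the three identities in parallel, writing each realized variation as a single ``main term'' of the shape $c_k\int_0^T R_{u-}^{\,k}\,\D[R]_u$, with $(c_0,c_1,c_2)=(1,2,4)$ corresponding to $[R]_T,[R,R^2]_T,[R^2]_T$, plus an explicit sum over jumps, and then to represent each main term as the sum of a terminal European payoff and a $\Q$-martingale. The first reduction is pure It\^o calculus. Integration by parts gives $[R]_T=\int_0^T\D[R]_u$; using $\D R^2=2R_{-}\,\D R+\D[R]$ together with $[R,[R]]_t=\sum_{s\le t}(\Delta R_s)^3$ (the covariation of $R$ with its own finite-variation quadratic-variation process) yields
\begin{align*}
[R,R^2]_T &= 2\int_0^T R_{u-}\,\D[R]_u+\sum_{s\le T}(\Delta R_s)^3,\\
[R^2]_T &= 4\int_0^T R_{u-}^{\,2}\,\D[R]_u+\sum_{s\le T}\bigl(4R_{s-}(\Delta R_s)^3+(\Delta R_s)^4\bigr).
\end{align*}
The trailing jump sums are already $O((\Delta R_t)^3)$ and are destined for $J_3$ and $J_4$.

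Next I would synthesize the main term with options. Since $\D R^c=\D F^c/F_{-}$, I write $R_t=\rho(t,F_t)$ with $\rho(t,F)=\log(F/S_0)-r(T-t)$ and seek $G_k(t,F)$ solving $\tfrac12\,\partial_{FF}G_k\,F^2=c_k\,\rho(t,F)^k$, so that the second-order It\^o term of $G_k(t,F_t)$ reproduces exactly $c_k R^k\,\D[R^c]$. Integrating twice in $F$, one finds $\partial_F G_k$ is, up to a constant multiple, precisely $1/F$, $(R+1)/F$ and $(R^2+2R+2)/F$ for $k=0,1,2$; this is why condition~\eqref{Eq:condition} is the right hypothesis, as it is exactly the statement that $\int_0^T\partial_F G_k(u,F_u)\,\D F_u$ lies in $L^2_{\Q,[F]}$ and is therefore a genuine $\Q$-martingale of zero expectation (invoking the integral theory in~\cite{Kuo}). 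The remaining terminal contribution $\E^{\Q}[G_k(T,F_T)]$ I would span by the Carr and Madan formula~\citep{CarrMadan} about the forward $F_0=\e^{rT}S_0$, at which the put/call split defining $\phi$ occurs; since $\partial_{KK}G_k(T,K)=2c_k(\log(K/S_0))^k/K^2$, this produces precisely the stated weights $2,4,8$ against $(\log(K/S_0))^k K^{-2}\,\phi(S_0,K)$.

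All non-option contributions are then collected into $J_{k+2}$: the cubic jump sums from Step~1, the discrete jump part $\sum_s c_k R_{s-}^{\,k}(\Delta R_s)^2$ of $c_k\int R_-^{\,k}\,\D[R]$, and the It\^o jump remainder $\sum_s\bigl[G_k(s,F_s)-G_k(s,F_{s-})-\partial_F G_k(s,F_{s-})\,\Delta F_s\bigr]$ of the spanning function. The point is that the last remainder, Taylor-expanded to second order, equals $\tfrac12\partial_{FF}G_k\,F_{s-}^2(\Delta R_s)^2=c_k R_{s-}^{\,k}(\Delta R_s)^2$ to leading order, which cancels the discrete jump part so that only a cubic residual survives; for $k=0$ this is the familiar identity $2\bigl(\Delta R-(\e^{\Delta R}-1)\bigr)+(\Delta R)^2=-\tfrac13(\Delta R)^3+\cdots$. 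Hence $J_2,J_3,J_4=O((\Delta R_t)^3)$.

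The step I expect to be the main obstacle is the finite-variation term $\int_0^T\partial_t G_k(u,F_u)\,\D u$ generated by the time dependence of $G_k$, which enters solely through the drift $r(T-t)$ in $\rho$. For $k=0$ one has $\partial_t G_0\equiv0$, so $[R]_T$ is represented exactly; but for $k=1,2$ this term is nonzero and is a cross term of the interest rate against accumulated quadratic variation, of the form $r\int_0^T(\,\cdot\,)\,\D u$, which vanishes in the forward-martingale normalization $r=0$. I would bound it against the return scale and argue it is of the same cubic order as the jump corrections, absorbing it into $J_3,J_4$; making this order estimate precise, together with verifying the $L^2_{\Q}$ integrability needed so that the option integrals converge at $K\to0$ and $K\to\infty$, is the technical heart of the argument.
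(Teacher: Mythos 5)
Your architecture is, modulo packaging, the same as the paper's. Your opening reduction $[R,R^2]_T=2\int_0^T R_{u-}\,\D[R]_u+\sum_{s\le T}(\Delta R_s)^3$ and $[R^2]_T=4\int_0^T R_{u-}^2\,\D[R]_u+\sum_{s\le T}(4R_{s-}(\Delta R_s)^3+(\Delta R_s)^4)$ is correct and is an equivalent repackaging of the paper's route through Lemma~\ref{Lemma:representation2}, which instead applies It\^o to $f(F),g(F)$ and isolates $\int f'g'\,\D[F^c]$ plus $\sum\Delta f\,\Delta g$; the two bookkeepings agree since $\D[F^c]=F_{-}^2\,\D[R^c]$. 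Your spanning functions $G_k$ with $\tfrac12\partial_{FF}G_k\,F^2=c_k\rho^k$ are exactly the paper's $H,\bar H$; your identification of $\partial_FG_k$ with $1/F$, $(R+1)/F$, $(R^2+2R+2)/F$ correctly explains hypothesis~\eqref{Eq:condition} as the martingale condition killing the stochastic integral (the paper does the same via Lemma~\ref{Lemmma:representation}); the weights $2,4,8$ from $\partial_{KK}G_k(T,K)$ match; and your cancellation of $\sum_s c_kR_{s-}^k(\Delta R_s)^2$ against the It\^o jump remainder, leaving cubic residuals, reproduces the paper's explicit jump integrands --- your $k=0$ identity $2(\Delta R-(\e^{\Delta R}-1))+(\Delta R)^2=-\tfrac13(\Delta R)^3+\cdots$ is precisely the paper's $J_2$ integrand $2(1+x+\tfrac12x^2-\e^x)$.

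The genuine gap is the step you yourself flag and then resolve incorrectly: the finite-variation term $\int_0^T\partial_tG_k(u,F_u)\,\D u$ for $k=1,2$ is \emph{not} of cubic order in the jumps and cannot be absorbed into $J_3,J_4$ as stated, because it survives for continuous processes. For $k=1$ one has $\partial_tG_1=-4r\log F$ up to affine normalization, and its expectation is an interest-rate-times-integrated-variance term. Concretely, in Black--Scholes with constant volatility $\sigma$ and $r>0$ (no jumps at all), a direct computation gives $\E^{\Q}[\,[R,R^2]_T]=r\sigma^2T^2-\tfrac12\sigma^4T^2$, while the option integral in~\eqref{Eq:tm}, with the put/call split at $\e^{rT}S_0$, evaluates to $2r\sigma^2T^2-\tfrac12\sigma^4T^2$; the residual $r\sigma^2T^2$ is exactly the dropped drift term and is manifestly not $O((\Delta R_t)^3)$. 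So your plan cannot be completed while keeping $J_3,J_4$ pure cubic jump corrections: one needs either the normalization $r=0$ (consistent with the paper's standing zero-drift theme) or an explicit extra correction of order $rT\cdot\E^{\Q}[\,[R]_T]$. You should be aware that the paper's own proof is leaky at exactly this joint: its Remark after Lemma~\ref{Lemma:representation2} extends only the \emph{covariation} identity to time-dependent $f,g$, whereas the replication lemma it then invokes is proved for time-independent weights, and the $\partial_t\bar H$ drift term is silently discarded. Your write-up has the virtue of making the problematic term explicit, but the claimed order estimate for it is wrong, and proving the theorem as literally stated requires removing or correcting that term, not absorbing it.
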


\begin{proof}
See Appendix~\ref{Apped:proof}.
\end{proof}

\section{Empirical analysis}\label{sect:empirical}

For the empirical study, we use the historical index and option data of S\&P 500.
The period of data set ranges from January 2001 to December 2007.
We only use Wednesday option prices to avoid weekend effects and among those the closing quotes of the days are selected.
We exclude options with too short time-to-maturity, less than 10 days.
We only use OTM option data since OTM options are more liquid than ITM options.

In Table~\ref{Table:data} we summarize the statistics of total 26,227 option data by time-to-maturity and moneyness, $K/S$,
where $K$ is strike price and $S$ is S\&P 500 index.
In the table we also report Black-Scholes implied volatilities and sample size.
In panel A we have put option data with total 15,814 observations and panel B is for 10,413 call option data.

\begin{table}
\caption{S\&P 500 option data for time to maturity $T < 40$ and $40 \leq T < 70$ days}\label{Table:data}
\begin{center}
\begin{tabular}{llrrrrr}
\toprule
$K/S$ & &\multicolumn{2}{c}{$T < 40$ } & \multicolumn{2}{c}{$40 \leq T < 70$ }  \\
\cmidrule(rl){3-4} \cmidrule(rl){5-6}
 & & Mean & SD & Mean & SD \\
\midrule
$<0.85$ & Price & 0.60 &  0.93 & 1.42 &  1.79\\
& Implied vol. &  0.39 &  0.10& 0.33 & 0.08 \\
& Sample size &  1797& &1567  \\
\midrule
$0.85-1.00$ & Price  & 5.16 & 6.19 & 9.47 &8.96 \\
& Implied vol. & 0.22 &  0.08& 0.21& 0.06 \\
& Sample size &  8529& & 3921 \\
\midrule
$1.00-1.15$& Price & 8.50& 9.13& 5.61& 7.06\\
& Implied vol. &  0.13 & 0.05& 0.16&  0.06\\
& Sample size & 3429& &6513 \\
\midrule
$>1.15$ & Price & 0.61 & 2.85& 1.03& 1.08\\
& Implied vol. & 0.29 & 0.09 & 0.22 & 0.04\\
& Sample size & 258& & 213 \\
\bottomrule
\end{tabular}
\end{center}
\end{table}

In Table~\ref{Table:summary} we summarize the statistics for the realized variations computed from S\&P 500 and the option implied expectations of variations with $T = 30$ days.
Both are annualized.
The means, standard deviations, skewness, kurtosis and Ljung-Box test statistics with 18 lags are reported.
Ljung-Box test examine whether the autocorrelation of a time series are different from zero.
Though the test is known to be valid under the strong white noise assumption, we use the test statistics to quantify the serial conditional correlations.
The statistics suggest that the risk-neutral variations are more serially correlated than realized ones.
Note that the absolute values of sample means of all option implied expectations of quadratic variations are higher than average realized variations.
The realized and option implied variations for variance are persistence.
The realized and option implied third moment covariation and fourth moment variations are less persistence.

\begin{table}
\caption{Statistics for the realized variations and the option implied expectations of variations}\label{Table:summary}
\begin{center}
\begin{tabular}{lrcrrr}
\toprule
Variation & Mean & SD & Skew & Kurt. & Ljung-Box\\
\midrule
\multicolumn{4}{l}{\it Panel A: Realized variation}\\
\noalign{\smallskip}
$[R,R]_T$ & $2.27\times10^{-2}$ & $2.27\times10^{-2}$ & $2.53$ & $10.80$ & 2094.1\\
$[R,R^2]_T$ & $-8.49\times10^{-4}$ & $3.80\times10^{-3}$ & $-3.27$ & 21.42 & 313.6\\
$[R^2,R^2]_T$ & $2.65\times10^{-4}$ & $8.59\times10^{-4}$ & 5.51 & 38.92 & 410.7\\
\midrule
\multicolumn{4}{l}{\it Panel B: Option implied expectation}\\
\noalign{\smallskip}
$[R,R]_T$  & $4.45\times10^{-2}$ & $3.63\times10^{-2}$ & 1.74 & $6.15$ & 2687.4\\
$[R,R^2]_T$ & $-3.23\times10^{-3}$ & $4.28\times10^{-3}$ & $-2.96$& 14.21 & 1680.7\\
$[R^2,R^2]_T$ & $1.48\times10^{-3}$ & $2.33\times10^{-3}$ & 3.54 & 18.59 & 1556.6\\
\bottomrule
\end{tabular}
\end{center}
\end{table}

For the quadratic variation $[R,R]_T$, the estimation of return variance, the results are similar with \cite{CarrWu}.
Both realized and option implied covariations for the third moment have negative values of sample mean and
the sample mean of option implied covariation is smaller than realized one.
This implies that the risk-neutral distribution of return is more negatively skewed than physical distribution.
Also note that the sample mean of the fourth moment variation is greater than realized one and this means the risk-neutral return distribution has fatter tail.

In Figure~\ref{Fig:third} the dynamics of realized third moment covariation $[R,R^2]$ (solid line) and its option implied expectation (dash-dot line) are plotted.
The third moment covariations are generally negative and the absolute values of option implied expectations are larger than those of realized covariations.
This result is agree with the fact that return distribution is left skewed and risk-neutral distribution is more heavily left skewed.
From 2001 to 2002 (during dot-com bubble) and after 2007 (at the beginning of financial crisis), the third moment covariations are more fluctuating and have large absolute values in general.
Otherwise, the absolute values of covariations are closed to zero.

In Figure~\ref{Fig:fourth} the historical behavior of realized fourth moment variation $[R^2]$ (solid line) and its option implied expectation (dash-dot line) are plotted.
Similarly with the previous one, option implied variations are generally higher than realized variations.
and from 2001 to 2002 and after 2007, the variations are large.
Otherwise, the values are close to zero.

\begin{figure}
\begin{center}
\includegraphics[width=10cm]{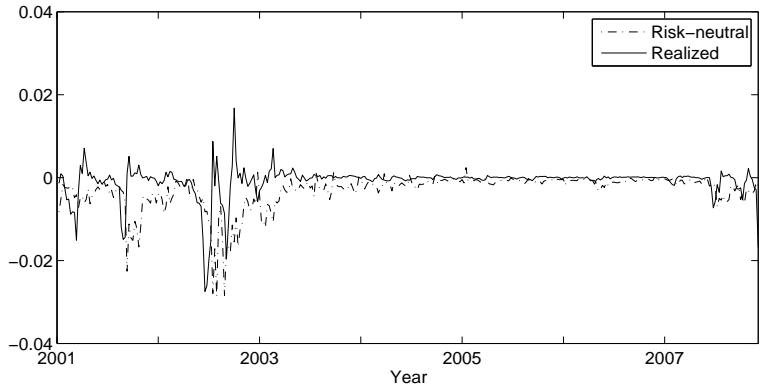}
\end{center}
\caption{The dynamics of realized (solid) and option-implied (dash-dot) third moment variation}
\label{Fig:third}
\end{figure}

\begin{figure}
\begin{center}
\includegraphics[width=10cm]{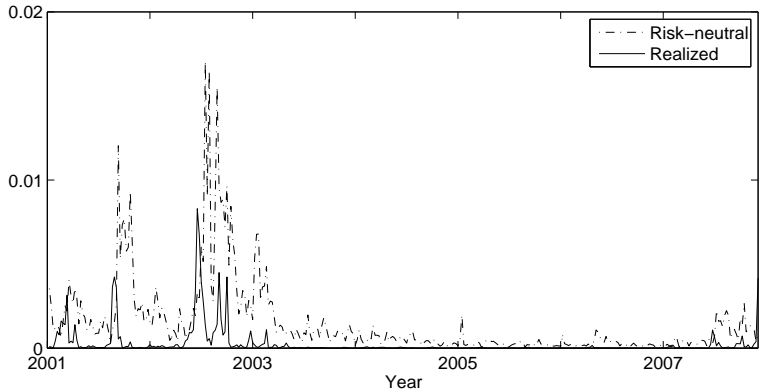}
\end{center}
\caption{The dynamics of realized (solid) and option-implied (dash-dot) fourth moment variation}
\label{Fig:fourth}
\end{figure}

In Example~\ref{Ex:Heston}, we show that when the price process follows a stochastic volatility model, the expected variations and actual moments are in linear relationship under the absence of the drift in return process.
However, the exact relation between variations and moments may depend on the choice of model as in
\cite{DuKapadia} where the model-specific mathematical relation between the quadratic variation and the variance are derived.

It is worthwhile to empirically examine the relationship.
Since the actual moments of return distribution is hard to compute under physical probability, we employ the analysis under risk-neutral probability.

In \cite{BakshiKapadiaMadan}, the cubic and quartic contracts are defined to have the payoffs $R_T^3$ and $R_T^4$, respectively.
Then the risk-neutral expectations of the contracts, i.e., risk-neutral third and fourth moments of return $R_T$ are represented by European option prices with maturity $T$.
More precisely,
\begin{align}
\E^{\Q}[R_T^3] &= e^{rT}\int_{0}^{\infty}\frac{6\log(K/S_0)-3(\log(K/S_0))^2}{K^2}\phi(K) \D K \label{Eq:cubic}\\
\E^{\Q}[R_T^4] &= e^{rT}\int_{0}^{\infty}\frac{12(\log(K/S_0))^2-4(\log(K/S_0))^3}{K^2}\phi(K) \D K \label{Eq:quartic}.
\end{align}

The empirical study shows that the risk-neutral expectation of the third moment covariation $\E^{\Q}[[R,R^2]_T]$ is closely related to $\E^{\Q}[R_T^3]$.
Similarly, the risk-neutral expectation of the fourth moment variation $\E^{\Q}[[R^2]_T]$ has a significant relationship with $\E^{\Q}[R_T^4]$.
In Figure~\ref{Fig:cubic} we compare the dynamics of the option part of $\E^{\Q}[[R,R^2]_T]$ (dash-dot) with the dynamics of the risk-neutral expectation of third moment (solid).
The period time for the variations is fixed to 30 calender days.
Also, in Figure~\ref{Fig:quartic} we compare the dynamics of the option part of $\E^{\Q}[[R^2]_T]$ (dash-dot) with the dynamics of risk-neutral expectation of fourth moment (solid).

\begin{figure}
\begin{center}
\includegraphics[width=10cm]{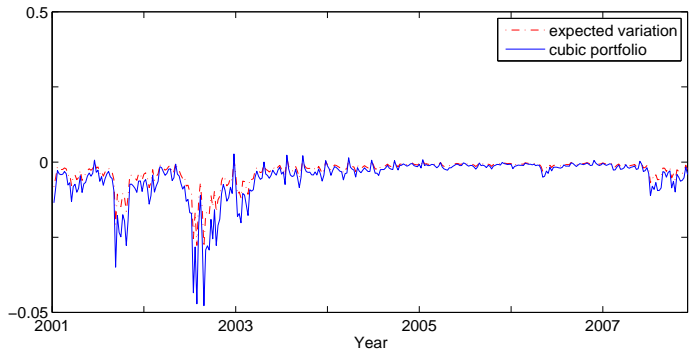}
\end{center}
\caption{The dynamics of risk-neutral expectations of third moment (solid) and the synthesized option values for $\E^{\Q}[[R,R^2]_T]$ (dash-dot) computed from S\&P 500 European option prices}
\label{Fig:cubic}
\end{figure}

\begin{figure}
\begin{center}
\includegraphics[width=10cm]{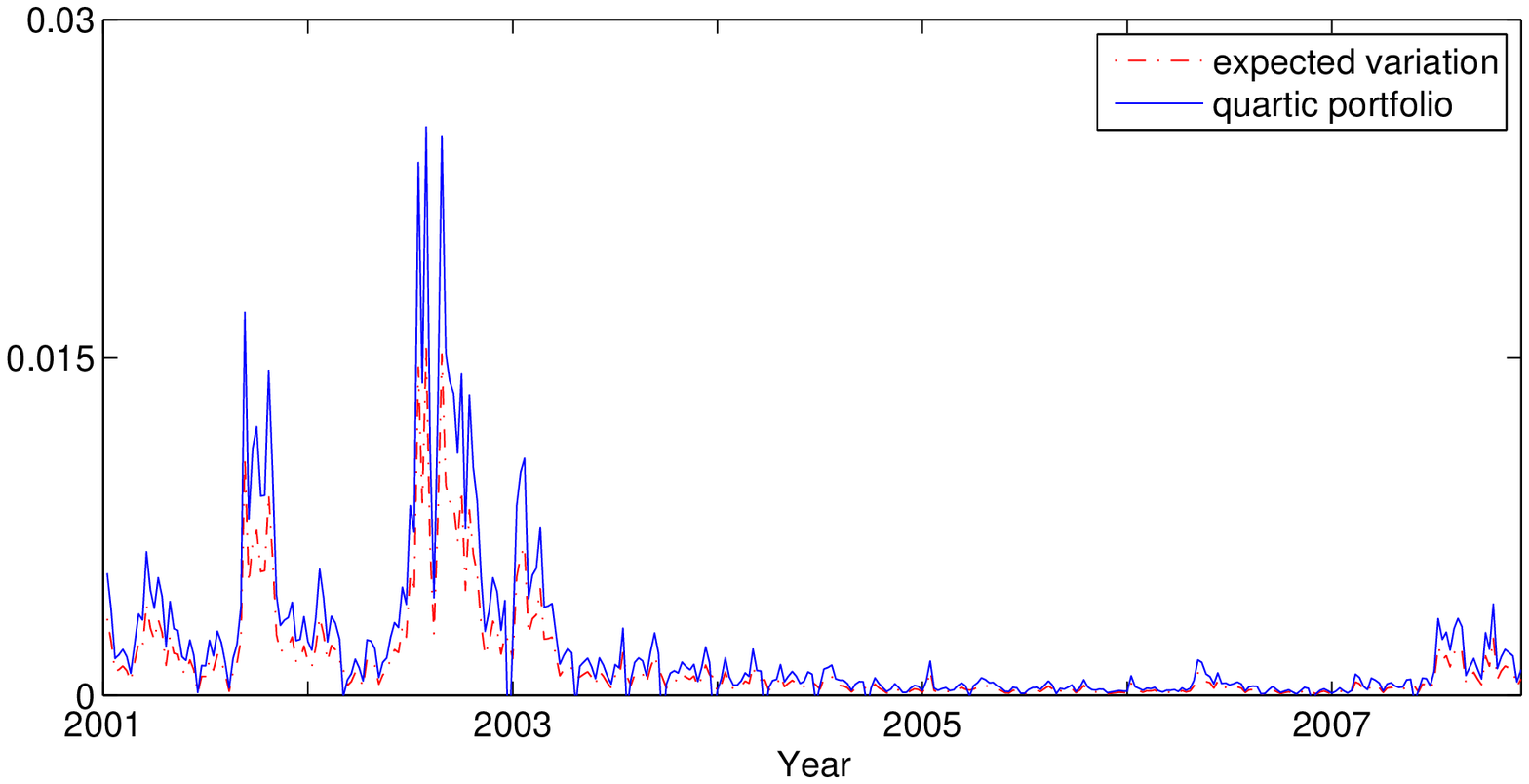}
\caption{The dynamics of risk-neutral expectations of fourth moment (solid) and the synthesized option values for $\E^{\Q}[[R^2]_T]$ (dash-dot) computed from S\&P 500 European option prices}
\end{center}
\label{Fig:quartic}
\end{figure}

Based on the empirical study, we may surmise the relations from empirical observations.
We employ the following linear regressions to examine the relationship between variations and moments:
\begin{align*}
\E^{\Q}[R_T^3] &= \beta_0 + \beta_1 \E^{\Q} [[R,R^2]_T] \\
\E^{\Q}[R_T^4] &= \beta_0 + \beta_1 \E^{\Q} [[R^2,R^2]_T]
\end{align*}
The results are shown in Table~\ref{Table:variation_moment}.
The regression graphs are plotted in Figure~\ref{Fig:regression}.
In the left of the figure, the estimated linear regression of the risk-neutral third moment vs. $[R^2,R]$ is plotted and in the right, the regression of the risk-neutral fourth moment vs. $[R^2]$ is plotted.

\begin{table}
\centering
\caption{Linear relationship between variations and moments}\label{Table:variation_moment}
\begin{tabular}{ccccccc}
\hline
& $\beta_0$ & (s.e.) & $\beta_1$ & (s.e.) & adjusted $\mathrm{R}^2$ & RMSE\\
\hline
third moment & 0.000 & (0.000) & 1.698 & (0.002) & 1.000 & 0.000\\
fourth moment & $0.000$ & (0.000) & 1.623 & (0.002) & 1.000 & 0.000\\
\hline
\end{tabular}
\end{table}

\begin{figure}
\centering
\includegraphics[width=7cm]{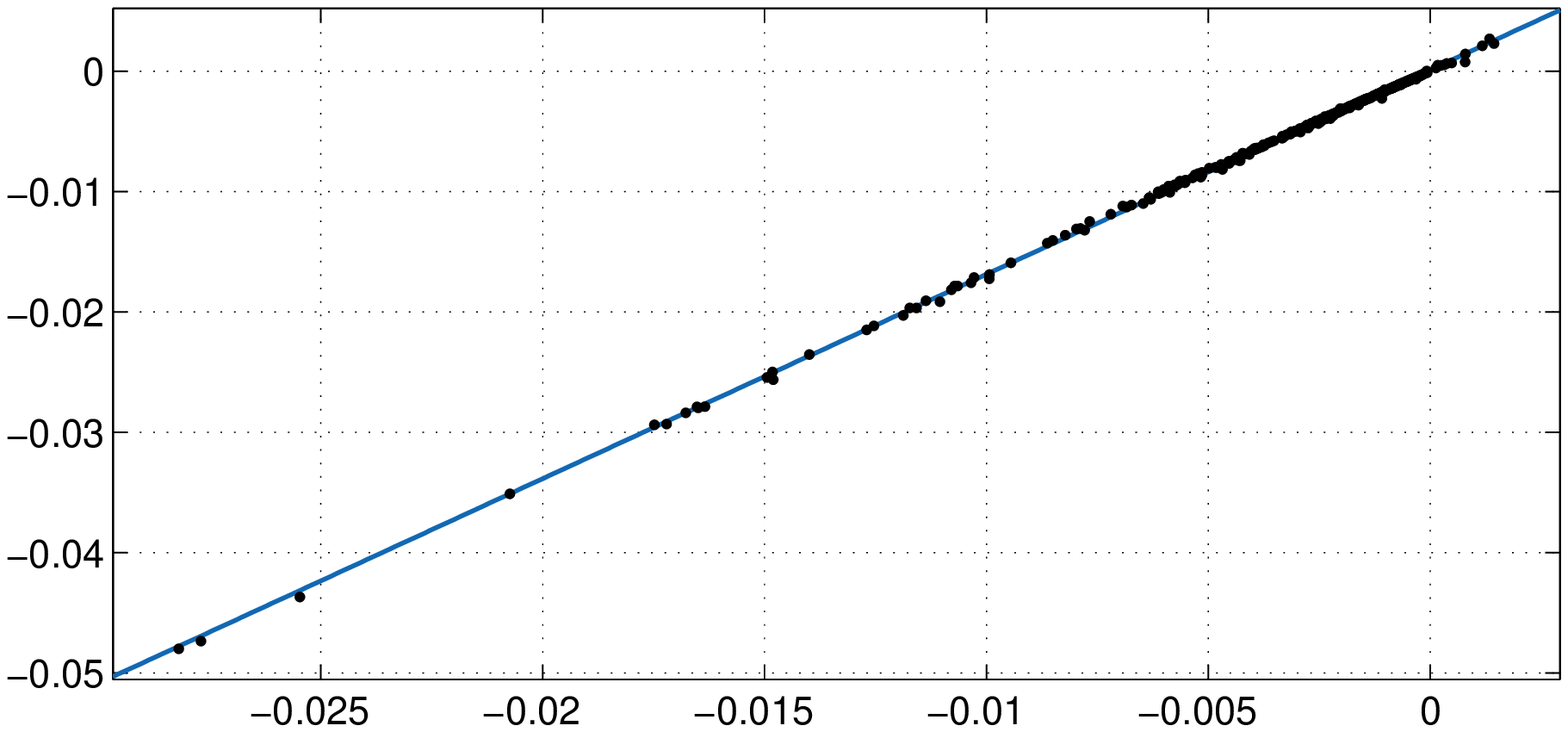}
\includegraphics[width=7cm]{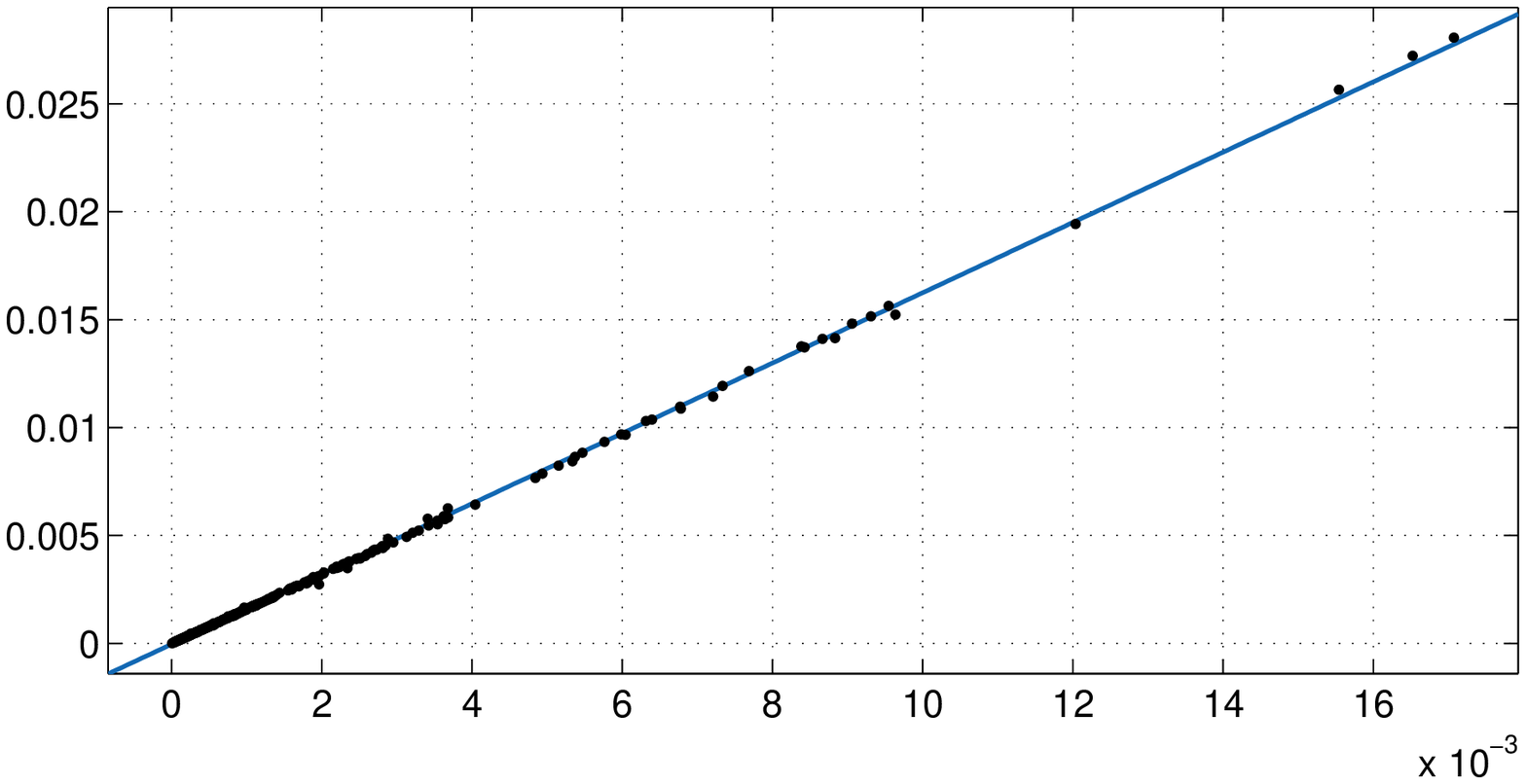}
\caption{Linear regression of risk-neutral third moment vs. $[R^2,R]$ (left) and risk-neutral fourth moment vs. $[R^2]$}\label{Fig:regression}
\end{figure}

Finite sample behaviors of errors are reported in Table~\ref{Table:error} where we use the variations as approximations to moments under risk-neutral probability.
Means of errors and 0.025 and 0.975 quantiles are presented.

\begin{table}
\centering
\caption{Finite sample behaviors of errors with mean and quantiles}\label{Table:error}
\begin{tabular}{cccc}
\hline
 & Mean & 0.025 & 0.075 \\
\hline
$\E^{\Q}[R_T^3] - \beta_1 \E^{\Q}[[R,R^2]_T]$ & $1.16\times10^{-4}$ & $-1.56\times10^{-4}$ & $3.84\times10^{-4}$\\
$\E^{\Q}[R_T^4] - \beta_1 \E^{\Q}[[R^2,R^2]_T]$ & $-3.29\times10^{-5}$ & $-1.83\times10^{-4}$ & $6.63\times10^{-5}$ \\
\hline
\end{tabular}
\end{table}

In the left of Figure~\ref{Fig:integrands} the integrands for the third moment variation in Eq.~\eqref{Eq:tm} (dash-dot) and the cubic portfolio in Eq.~\eqref{Eq:cubic} (solid) are plotted where $S_0 = 100$.
For the integrand in Eq.~\eqref{Eq:tm}, we multiply 1.698 (which comes from the linear regression in Table~\ref{Table:variation_moment}) for the linear approximation.
In the right of Figure~\ref{Fig:integrands} the integrands for the fourth moment variation in Eq.~\eqref{Eq:fm} (dash-dot) and the quartic portfolio in Eq.~\eqref{Eq:quartic} (solid) are compared.
For the integrand in Eq.~\eqref{Eq:fm}, we multiply 1.621 for the linear approximation.

\begin{figure}
\begin{center}
\subfigure[]{
\includegraphics[width=5cm]{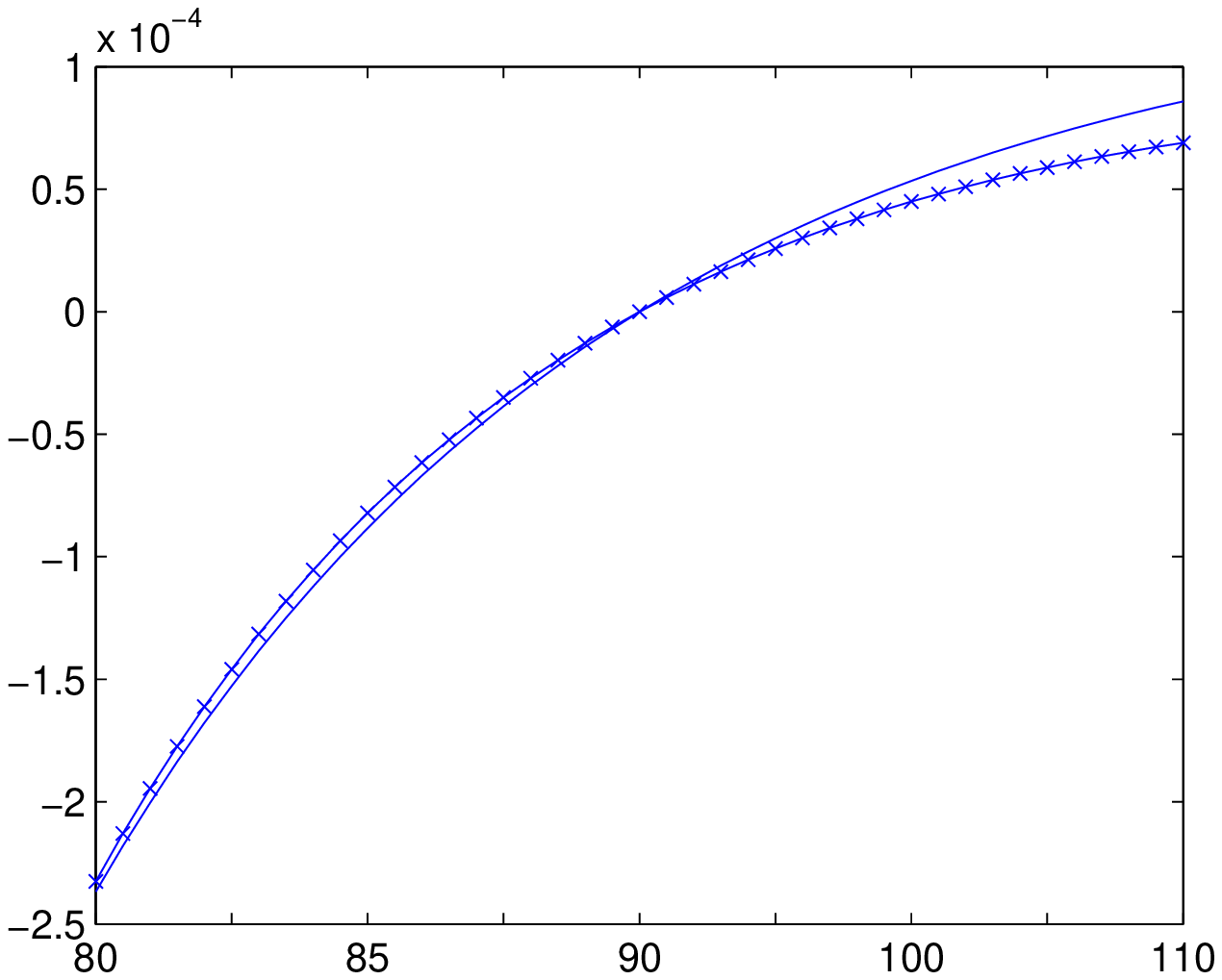}
}
\quad
\subfigure[]{
\includegraphics[width=5cm]{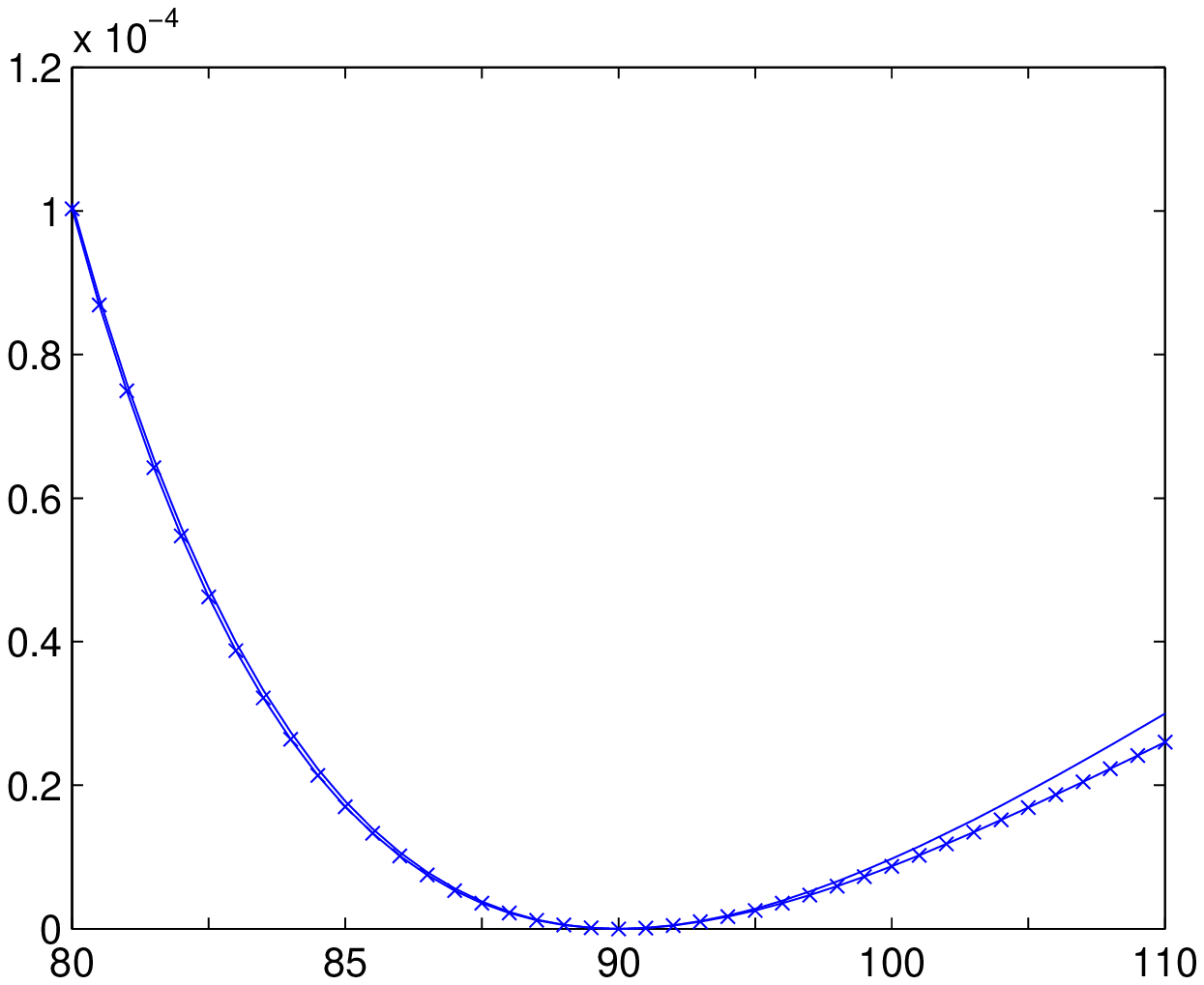}
}
\end{center}
\label{Fig:integrands}
\caption{(a) the integrands for the third moment variation (dash-dot) and the cubic portfolio (solid), (b) the integrands for the fourth moment variation (dash-dot) and the quartic portfolio (solid)}
\end{figure}

Now we compare risk-neutral skewness and kurtosis with approximations to the quantities based on our method.
The risk-neutral skewness and kurtosis is calculated based on the integration formulas of expected risk-neutral third and fourth moments by the method in \cite{BakshiKapadiaMadan}.
Also using the third moment covariation and the fourth moment variation combined with linear relationship in Table~\ref{Table:variation_moment} as approximations to third and fourth moments, we compute approximations to risk-neutral skewness and kurtosis.
Then we compare the results.
In Table~\ref{Fig:skewness} the dynamics of risk-neutral skewness (solid) and its approximation (dash-dot) based on the variation method are illustrated.
In Table~\ref{Fig:kurtosis} the dynamics of risk-neutral kurtosis (solid) and its approximation (dash-dot) based on the variation method are presented.
The figures suggest that our approximations are very close to the skewness and kurtosis.

\begin{figure}
\begin{center}
\includegraphics[width=10cm]{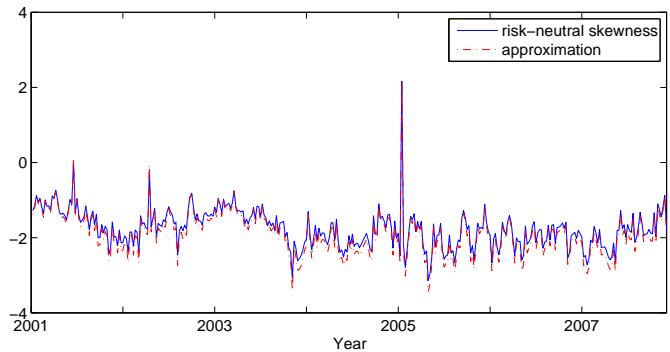}
\end{center}
\caption{The dynamics of risk-neutral skewness (solid) and its approximation (dash-dot)}
\label{Fig:skewness}
\end{figure}

\begin{figure}
\begin{center}
\includegraphics[width=10cm]{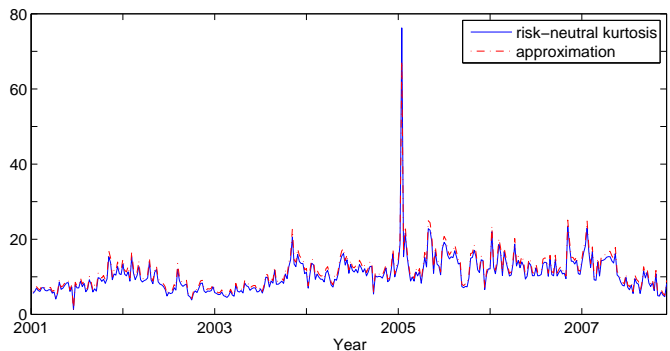}
\end{center}
\caption{The dynamics of risk-neutral kurtosis (solid) and its approximation (dash-dot)}
\label{Fig:kurtosis}
\end{figure} 

\section{Swap for variation}\label{sect:swap}
The swap for the third moment variation can be used to hedge shortfall risk of a financial asset.
The trading mechanism for the third moment variation swap is similar with the existing variance swap.
The one leg of the swap is the floating number based on the realized third moment variation of the asset return over a fixed time period and the other leg of the swap is a predetermined fixed strike price.
One different thing from the variance swap is that the floating leg of the third moment variation swap can be negative value.
An investor can hedge the downfall risk by buying a OTM put option, but in some option market the OTM put option is relatively highly priced by the speculators who seek high return with small amount of initial capital.
In this situation, the swap for third moment variation could be an alternative choice of over-the-counter product to hedge the risk.

One of the important and interesting features to use the third moment variation swap is that it is able to construct a portfolio whose return distribution is symmetric and Gaussian-like although the return distribution of the underlying asset is asymmetric and has left heavy tail.
Note that financial asset return distributions are generally negatively skewed.

To show this, we employ a simulation study to generate a negatively skewed return distribution.
Consider the Heston model in Example~\ref{Ex:Heston} with coefficients $\mu = 0.05, \kappa = 4,  \theta =0.09, \sigma= 0.4$ and $\rho = -0.9$.
Because of the negative $\rho$, the return distribution generated by the model is negatively skewed.
We simulate $10^5$ paths with $T=30$ days.
The skewness of the sample distribution is $-0.5030$ and
the QQ-plot for the return $R_T$ versus the standard normal distribution is in the left of Figure~\ref{Fig:QQ3}.
\begin{figure}
\begin{center}
\includegraphics[width=6cm]{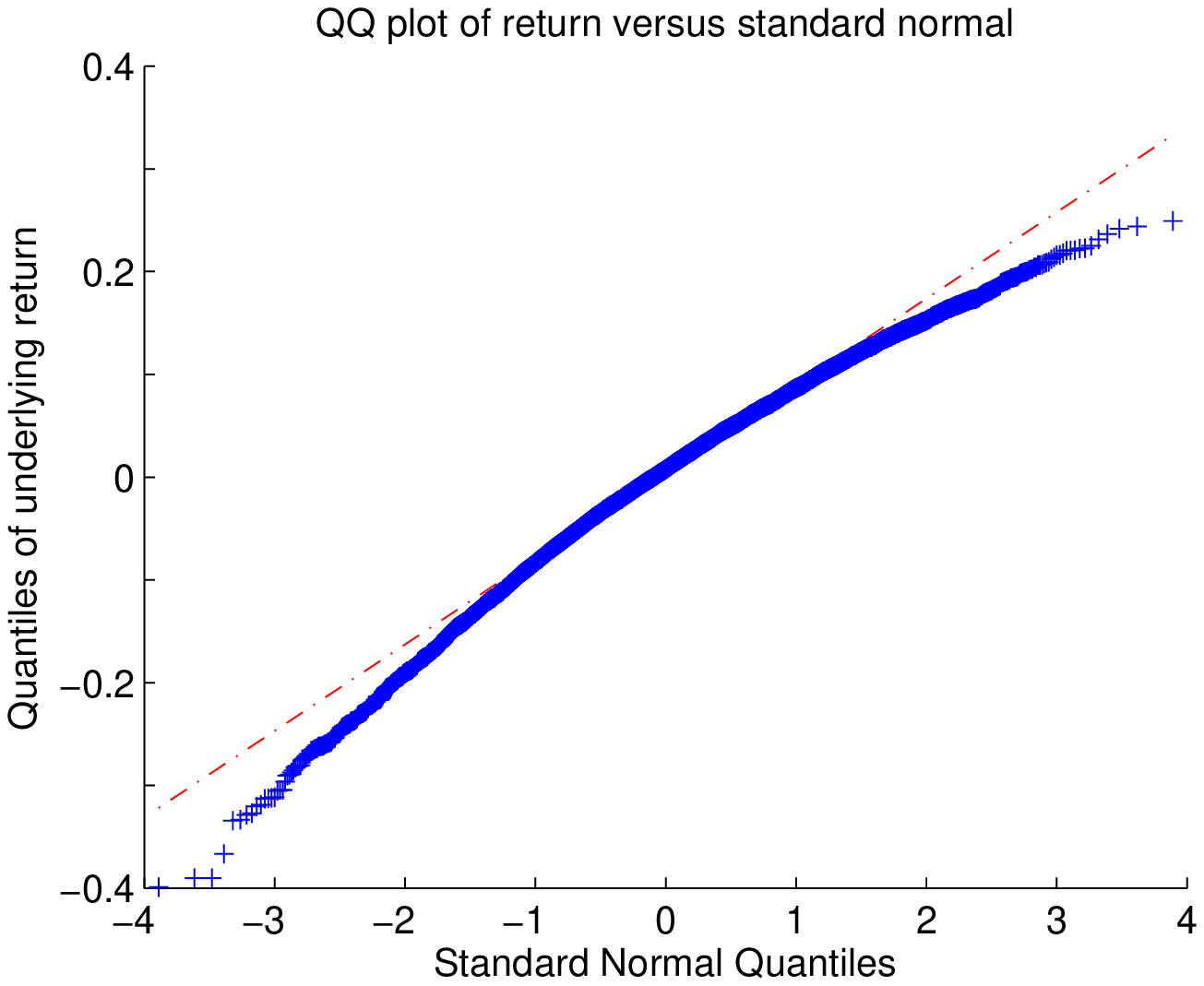}\quad
\includegraphics[width=6cm]{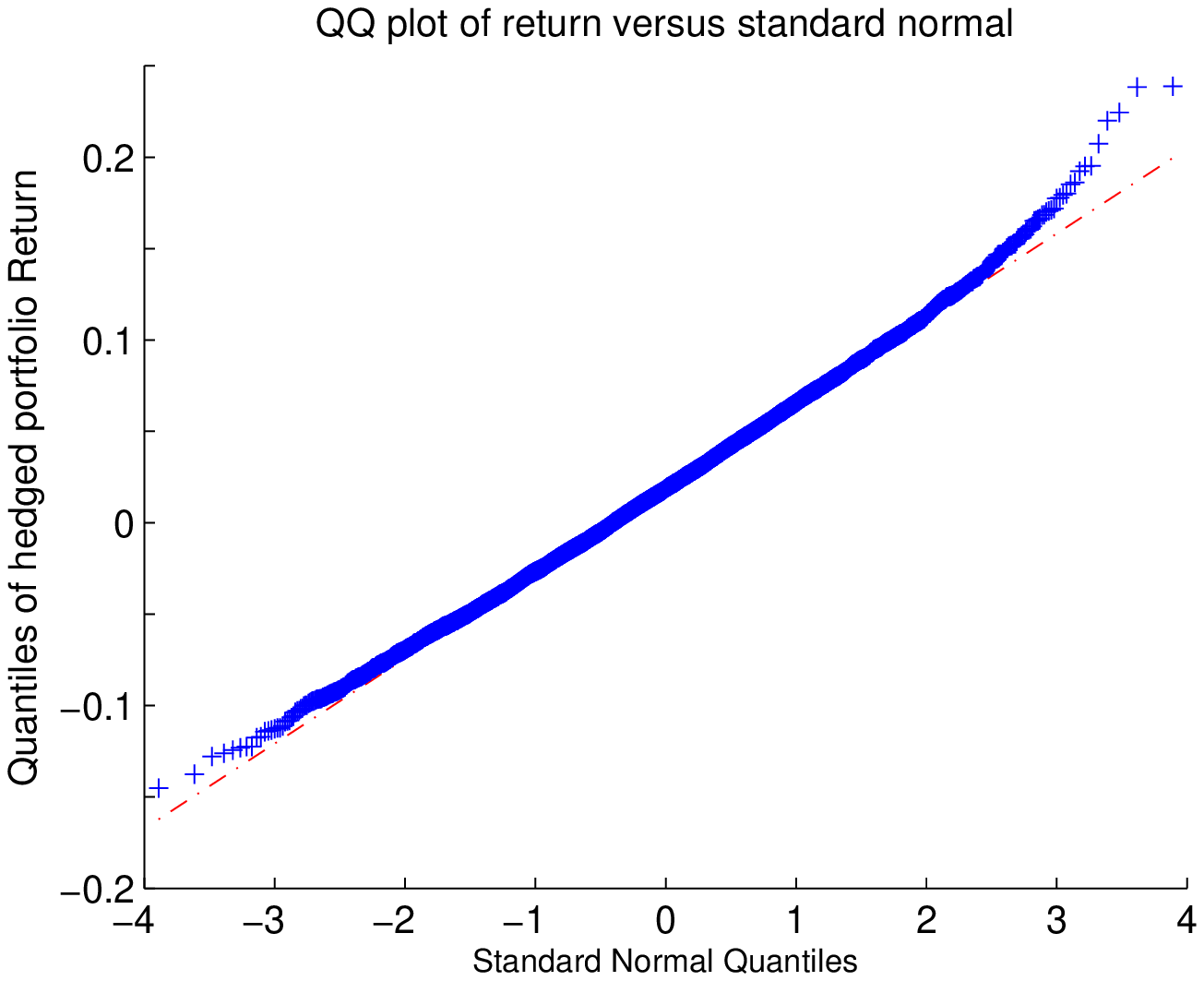}
\end{center}
\caption{QQ-plots for skewed return distribution versus normal (left) and the return distribution of hedged portfolio versus normal (right) under the Heston model}
\label{Fig:QQ3}
\end{figure}

Now suppose an investor holding an asset whose price process is assumed to follow the given Heston's model want to hedge negative tail risk so that the return distribution of the portfolio become Gaussian-like distribution by contracting a third moment variation swap.
A simple way to find an appropriate number of hedge position is to use linear regression.

Consider that an investor holding an asset $S$ buy $\beta S_0$ numbers of floating leg of the third moment variation $[R,R^2]$ over a time period $[0,T]$.
For simplicity, the fixed leg of the swap is zero and the exchange of floating leg occur at time $t$.
Then, the cash flow of the investor at time $T$ is $S_T - \beta S_0 [R,R^2]_T$.
Note that the log-return of the portfolio at time $T$ is
$$ \log \left(\frac{S_T - \beta S_0[R,R^2]_T }{S_0} \right) \approx R_t - \beta [R,R^2]_t.$$

Linear regression is applied to find the coefficients that fits the $R_t$, in a least-squares sense against the realized third moment variation $[R,R^2]_T$.
In this example, the result is
$$R_T = 0.0163 + 85.4949 [R,R^2]_T.$$
(Note that this is a simple example to find the number of hedge position and we cannot say this is the best method to determine the number of hedge position. However, the result based on this simple method is quiet good.)
Therefore, one construct a portfolio consisting of one stock and short positions of 85.4949 third moment swap
when we set $S_0=1$ for simplicity.
In other words, one receive at maturity $S_T - 85.4949 [R,R^2]_T$.

The QQ-plot for the return of the portfolio versus a normal distribution is in the right of Figure~\ref{Fig:QQ3}.
If the distribution of the hedged portfolio is close to normal, the plot will be close to linear.
The plot shows that the hedged portfolio has more Gaussian-like return distribution than the underlying asset and robust to shortfall risk.

Also the swap for the fourth moment variation can be used to hedge tail risk when the underlying return distribution is leptokurtic.
Consider the Heston model with coefficients $\mu = 0.05, \kappa = 0.5,  \theta =0.09, \sigma= 1.2$ and $\rho = 0$.
With these setting of parameters, the return distribution have fat tail as in the left of Figure~\ref{Fig:QQ4} (with sample kurtosis 4.8273).
In this situation, one receive the floating leg of the realized fourth moment variation if the underlying return is negative and pay the realized fourth moment variation if the underlying return is positive.
To find an appropriate number of the contract of the swap, we apply linear regression for the absolute value of return against the realized fourth moment variation:
$$|R_T| = 0.0017 + 53.9310 [R^2]_T$$
(Similarly with the previous example, we cannot say this is the best method to determine the number of hedge position.)
The return of portfolio consisting of the underling and 53.9310 fourth moment swap is more Gaussian-like distribution as in the right of Figure~\ref{Fig:QQ4}.
\begin{figure}
\begin{center}
\includegraphics[width=6cm]{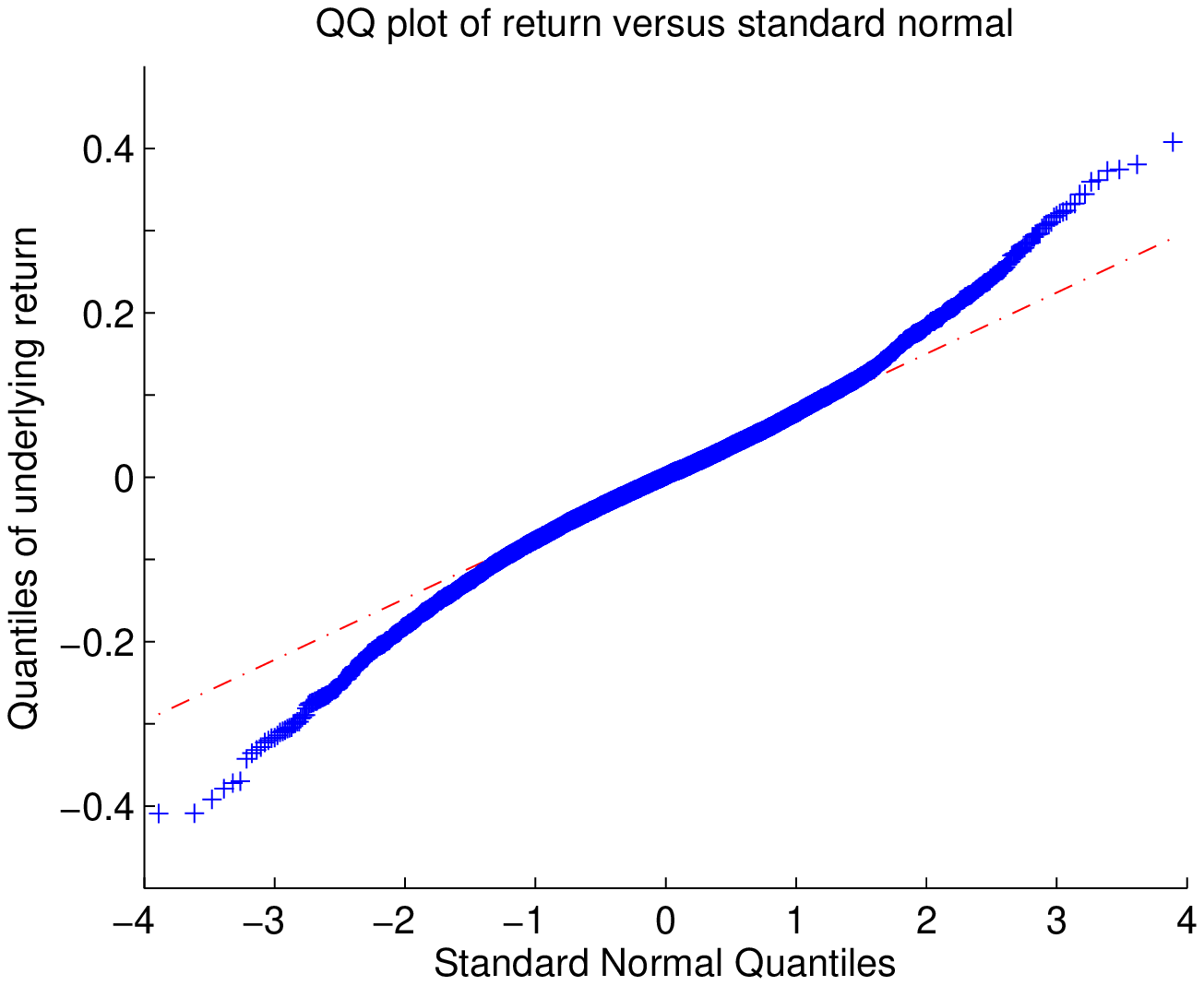}\quad
\includegraphics[width=6cm]{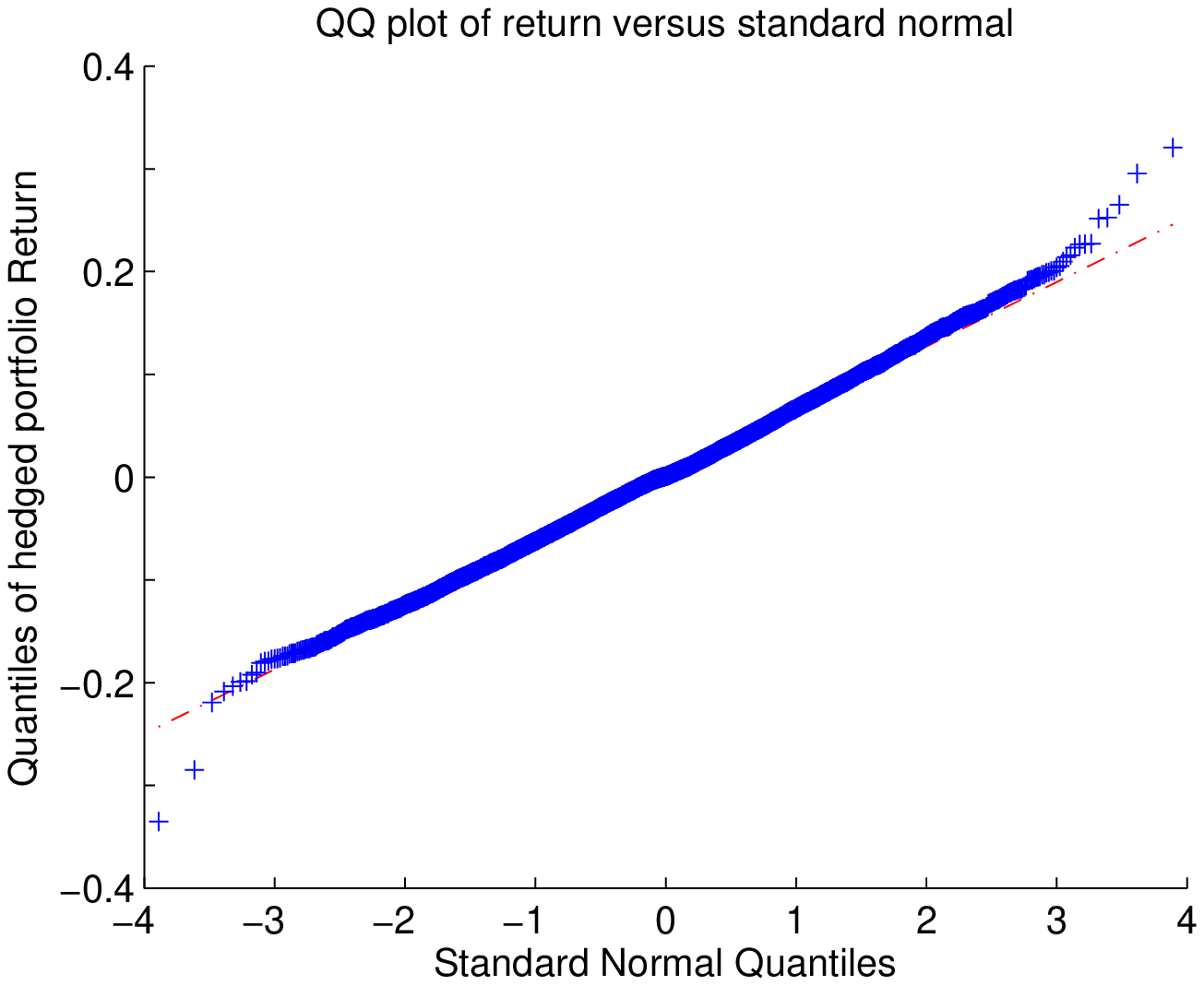}
\end{center}
\caption{QQ-plots for fat-tailed return distribution versus normal (left) and the return distribution of hedged portfolio versus normal (right) under the Heston model}
\label{Fig:QQ4}
\end{figure}

To examine empirical application, five-minute time series of S\&P 500 index ranged from January 2001 to December 2007 are used.
We construct four kinds of portfolio to test the tail behaviors.
First one is S\&P 500 index over one month period.
Second one is the portfolio composed of the index and the floating leg of third moment variation over one month period.
Third one is the portfolio composed of the index and the floating leg of fourth moment variation as explained in the simulation study.
The last one is the portfolio composed of the index and the realized third moment defined by \cite{Neuberger2012} and \cite{Kozhan}.

Keeping in mind that our goal is to construct a portfolio whose return distribution is more Gaussian-like,
the error is defined as the the difference between the quantile of the constructed portfolio return and the quatile of the normal distribution.
The mean and variance of the normal distribution is set to match to the mean and variance of constructed portfolio.
The weight of portfolio is determined to minimize the root mean square errors.
The results are presented in Figure~\ref{Fig:QQ_SP}.

\begin{figure}
\begin{center}
\includegraphics[width=6cm]{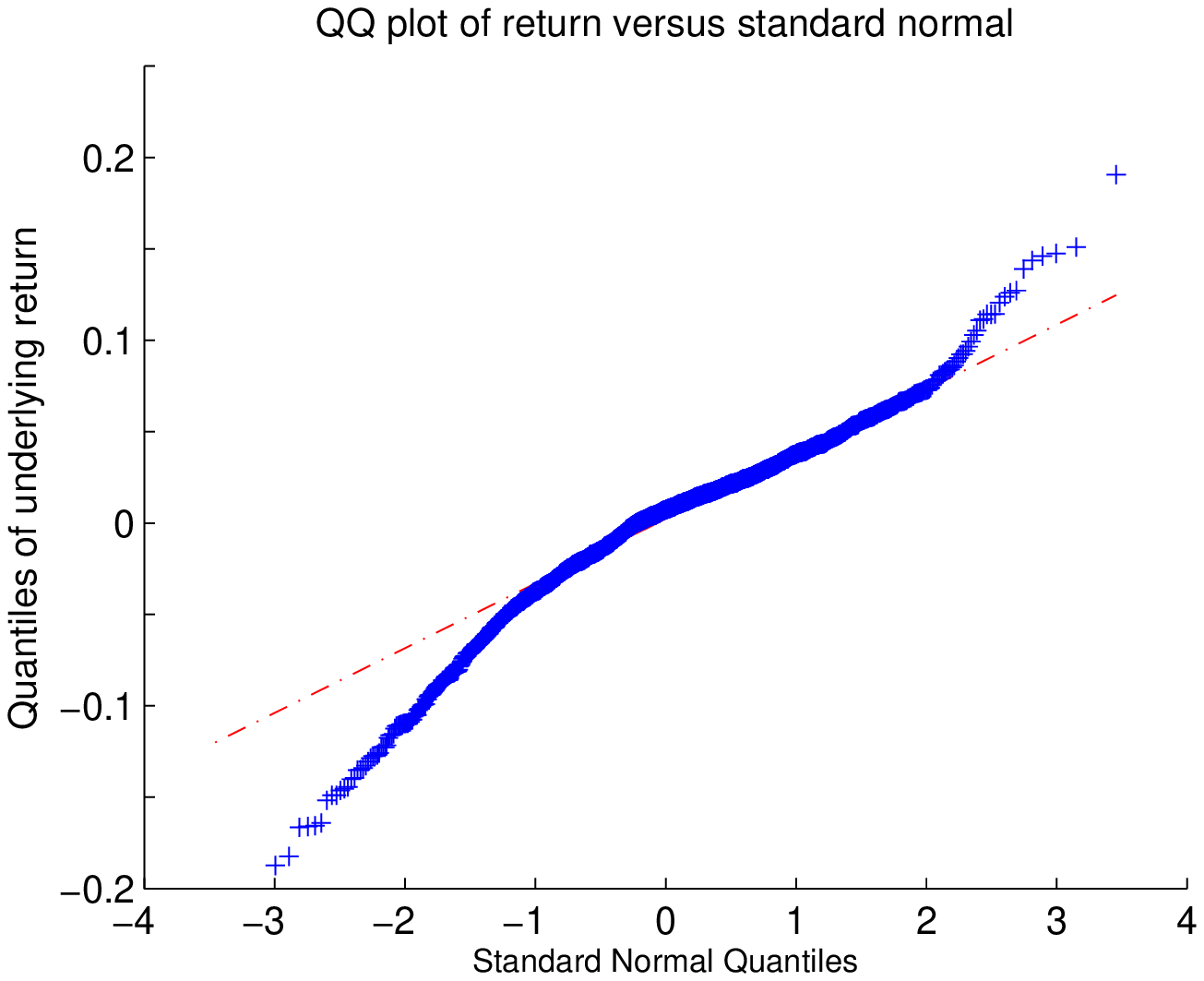}\quad
\includegraphics[width=6cm]{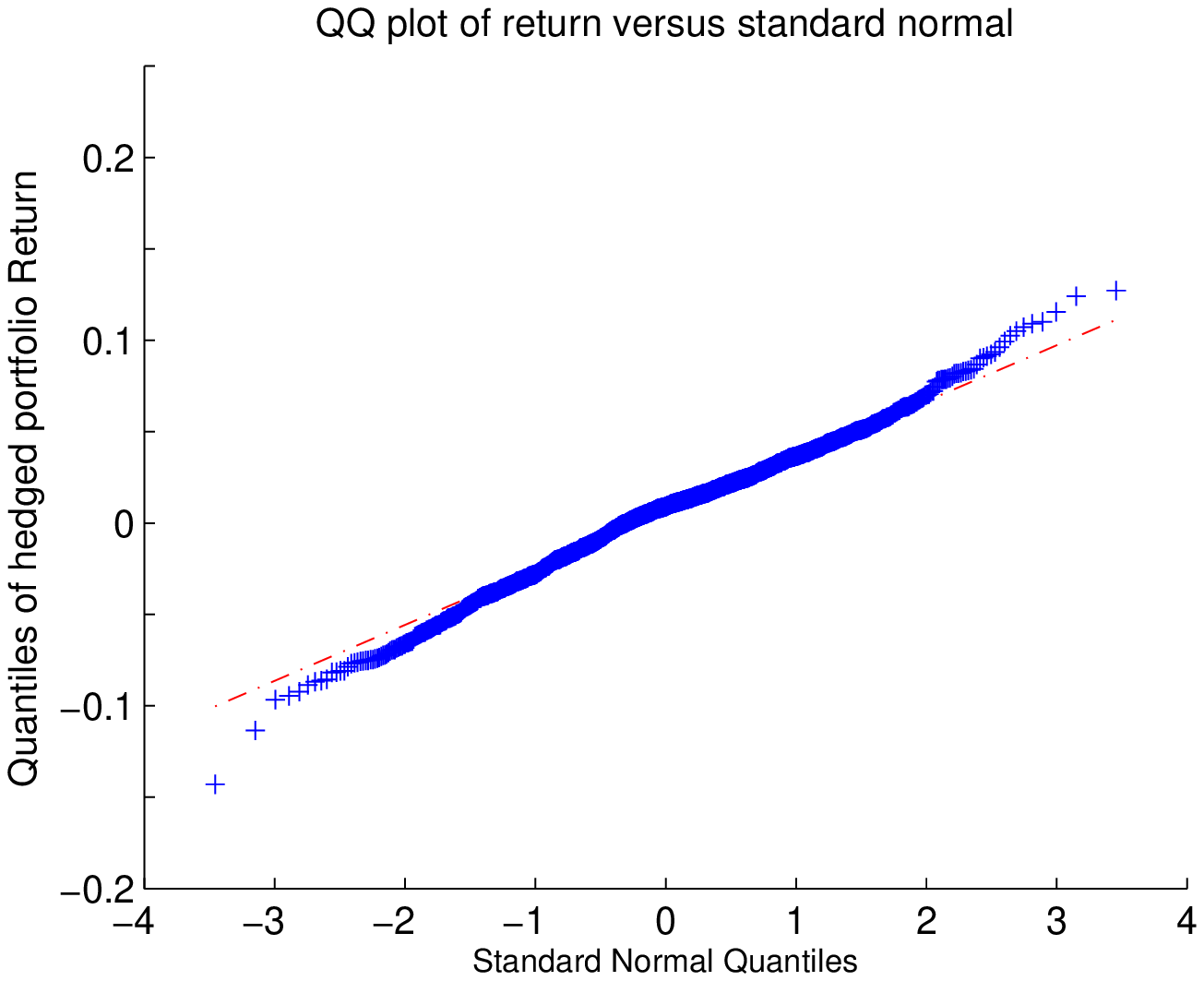}\\
\includegraphics[width=6cm]{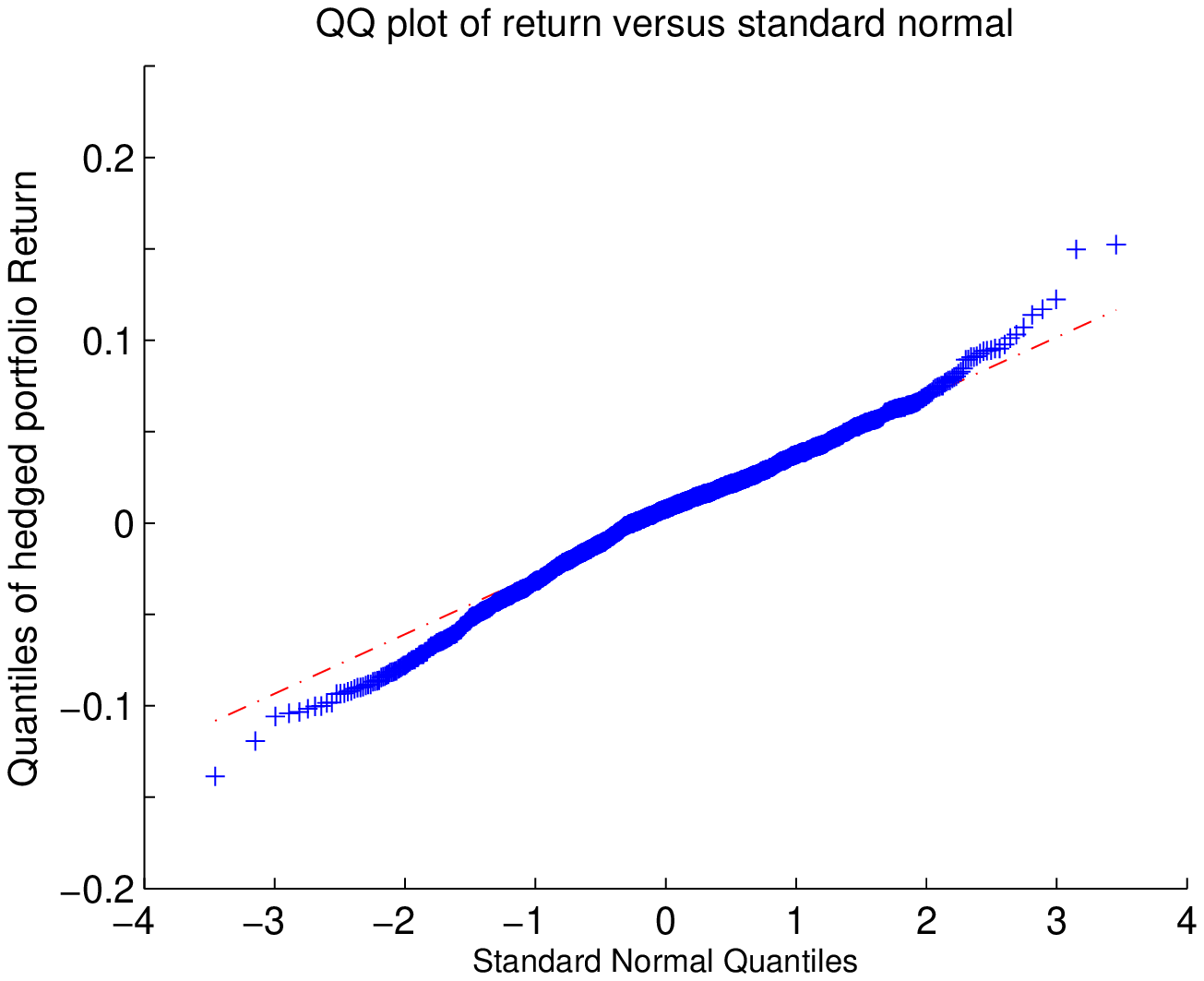}\quad
\includegraphics[width=6cm]{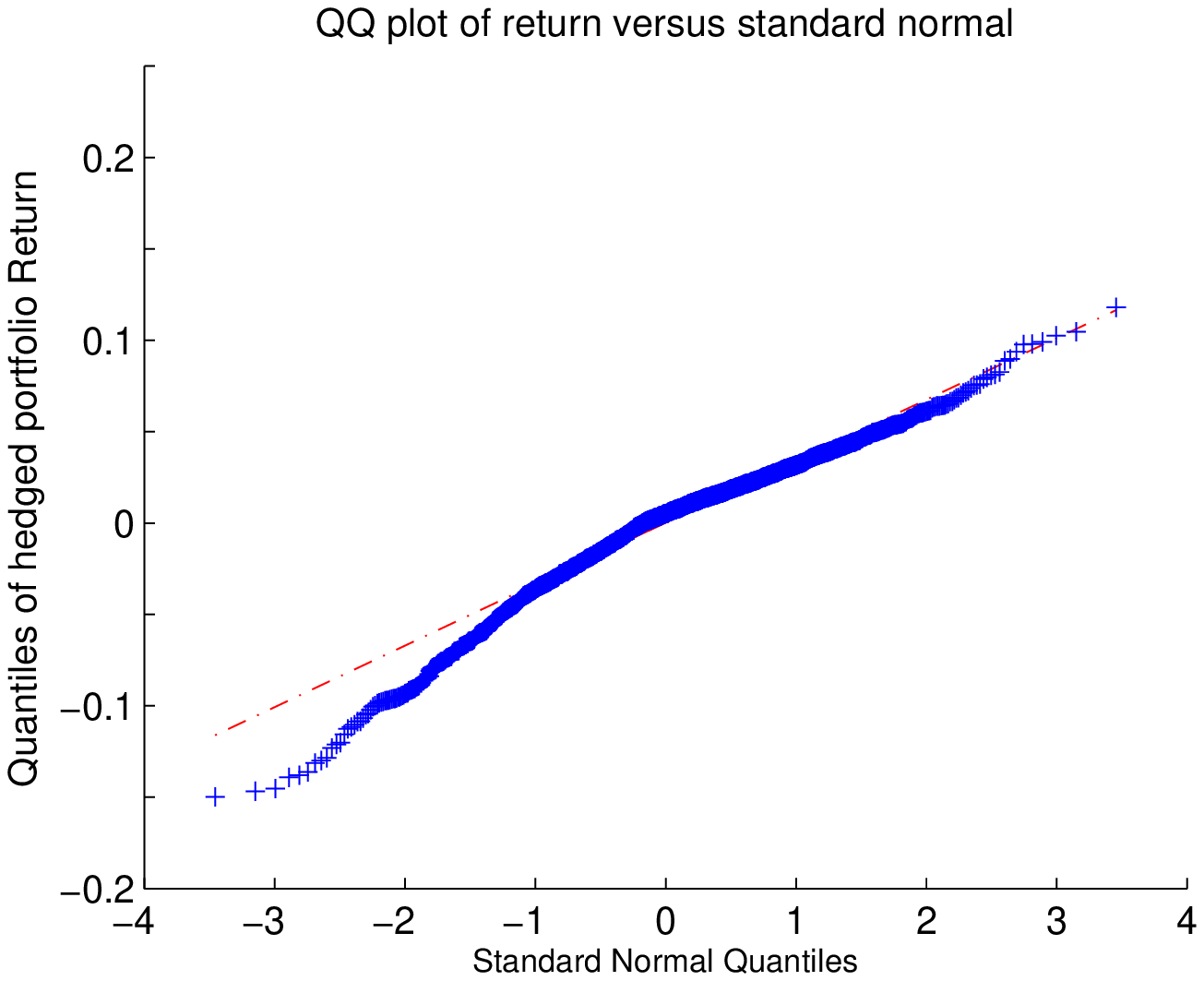}
\end{center}
\caption{QQ-plots for S\&P 500 return distribution versus normal (top-left), the return distribution of hedged portfolio with realized third moment variation (top-right), the return distribution of hedged portfolio with realized fourth moment variation (bottom-left) and the return distribution of hedged portfolio with realized Neuberger third moment (bottom-right) versus normal}
\label{Fig:QQ_SP}
\end{figure}

In the top-left panel, the skewed and heavy tailed distribution of S\&P 500 return is presented.
In the top-right panel, it is shown that the hedged portfolio has more Gaussian-like return distribution.
Similar properties are demonstrated in the bottom panels where the QQ-plots for the hedged portfolio with the fourth moment variation (left) and Neuberger third moment (right) are plotted.
In terms of RMSE, the most Gaussian-like return distribution is for the portfolio hedged with the third moment variation as presented in Table~\ref{Table:RMSE}.
The portfolios hedged by fourth moment and the Neuberger third moment also show modest performances.

\begin{table}
\caption{RMSE between the quantile of the return distribution of hedged portfolio and normal}\label{Table:RMSE}
\begin{center}
\begin{tabular}{cccc}
\midrule
& third moment & fourth moment & Neuberger \\
\midrule
RMSE & 0.0030 & 0.0042 & 0.0067\\
\midrule
\end{tabular}
\end{center}
\end{table}

The hedge price, the fixed leg of the swap, may be determined by two counterparties of the swap contract in the market as an interaction of supply and demand.
However, the risk-neutral expectations of the moments derived in Theorem~\ref{Thm:representation}, the prices of portfolios of European call and put options with specific weights, would be a theoretical value as in the case of variance swap.

\section{Conclusion}\label{sect:concl}
We define the third and fourth moment variations to deal with high moments properties of return distribution.
The realized moment variations are good approximations to the actual moments of return distribution.
Therefore, one can use the third or fourth moment variation swap to hedge tail risk.
The return of the hedged portfolio with the swap follows more Gaussian-like distribution and hence the investor with hedged position is more robust to tail risk.
We also derive the risk-neutral expectations of the moment variations in terms of European option prices.

\appendix
\section{Proof of Theorem~\ref{Thm:representation}}\label{Apped:proof}
To prove Theorem~\ref{Thm:representation}, we need some preliminary results.
The following lemma is an extended version of Proposition 1 in \cite{CarrWu} and the lemma for continuous processes is introduced by \cite{Lee2}.
\begin{lemma}\label{Lemma:replication2}
Let $X^c$ be continuous part of $X$.
If $g$ is a continuous function with its anti-derivative $G$ and second anti-derivative $\bar G$, then
\begin{align*}
&\int_{u}^{t} g(X_{s-}) \D [X^c]_s + 2\sum_{u \leq s \leq t} [\Delta \bar G(X_s)  - \Delta X_s G(X_{s-})] \\
={}& 2 \left( \int_{u}^{t} (G(X_u) - G(X_s)) \D X_s + \int_{X_u}^{X_t} g(K)(X_t - K) \D K \right)\\
={}& 2 \left( \int_{u}^{t} (G(X_u) - G(X_s)) \D X_s \right.
+ \left.  \int_{0}^{X_u} g(K) (K-X_t)^+ \D K + \int_{X_u}^{\infty} g(K) (X_t-K)^+ \D K \right).
\end{align*}
\end{lemma}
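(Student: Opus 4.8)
The plan is to obtain both equalities from a single application of the generalized It\^o formula for semimartingales with jumps, followed by an elementary deterministic Taylor identity and a sign analysis; no probabilistic input is needed beyond the first step.

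First I would apply It\^o's formula to the $C^2$ function $\bar G$ along $X$ over $[u,t]$. Since $\bar G' = G$ and $\bar G'' = g$, this gives
$$\bar G(X_t) - \bar G(X_u) = \int_u^t G(X_{s-})\,\D X_s + \frac{1}{2}\int_u^t g(X_{s-})\,\D [X^c]_s + \sum_{u < s \le t}\bigl[\Delta\bar G(X_s) - G(X_{s-})\,\Delta X_s\bigr].$$
Rearranging and multiplying through by $2$ reproduces exactly the left-hand side of the lemma, expressed as $2\bigl(\bar G(X_t) - \bar G(X_u) - \int_u^t G(X_{s-})\,\D X_s\bigr)$.

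Second, to reach the first displayed right-hand side, I would invoke Taylor's theorem with integral remainder for $\bar G$ between the random endpoints $X_u$ and $X_t$, namely
$$\bar G(X_t) = \bar G(X_u) + G(X_u)(X_t - X_u) + \int_{X_u}^{X_t} g(K)(X_t - K)\,\D K.$$
Substituting this into the output of step one and rewriting $G(X_u)(X_t - X_u) = \int_u^t G(X_u)\,\D X_s$ merges the two integrals against $\D X_s$ into $\int_u^t \bigl(G(X_u) - G(X_{s-})\bigr)\,\D X_s$, which yields the first equality. The second equality is then obtained by splitting the remainder integral according to the sign of $X_t - X_u$: if $X_t > X_u$ then on $[X_u,X_t]$ one has $(X_t - K)^+ = X_t - K$ and $(K - X_t)^+ = 0$, so the put integral drops out and $\int_{X_u}^{X_t} g(K)(X_t-K)\,\D K = \int_{X_u}^{\infty} g(K)(X_t-K)^+\,\D K$; the symmetric computation covers $X_t < X_u$. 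This is precisely the Carr--Madan static-replication split into out-of-the-money puts (strikes below $X_u$) and calls (strikes above $X_u$).

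The main obstacle is the bookkeeping of the jump terms in step one: I must verify that the compensated jump sum generated by the jump-It\^o formula coincides term-by-term with the sum $\sum[\Delta\bar G(X_s) - \Delta X_s\,G(X_{s-})]$ written in the statement, and that the stochastic integral is taken against the predictable integrand $G(X_{s-})$ (the $G(X_s)$ in the statement being read as a left limit, which is immaterial for the continuous part). Integrability of the integrands is furnished by the standing hypotheses, so once the jump accounting is matched the two remaining steps are purely deterministic and require no further justification.
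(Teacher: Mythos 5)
Your proposal is correct and follows essentially the same route as the paper's own proof: apply It\^o's formula for semimartingales to the $C^2$ function $\bar G$ (the paper writes this for a generic $f$ and then substitutes $g=f''$, which is the identical computation), compare with Taylor's theorem with integral remainder between $X_u$ and $X_t$, and split the remainder into put and call payoffs by the sign of $X_t-X_u$. Your explicit case analysis for the put/call split and your reading of $G(X_s)$ as the predictable left-limit integrand $G(X_{s-})$ only make explicit what the paper leaves implicit.
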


\begin{proof}
If $f$ is twice continuously differentiable, then by It\^{o}'s lemma for a semimartingale,
$$f(X_t) = f(X_u) + \int_{u}^{t} f'(X_{s-}) \D X_s + \frac{1}{2}\int_{u}^{t} f''(X_{s-}) \D [X^c]_s + \sum_{u < s \leq t}[\Delta f(X_s) - \Delta X_s f'(X_{s-})] $$
and by Taylor's theorem with the integral form of the remainder term
\begin{align*}
f(X_t) ={}& f(X_u) + f'(X_u)(X_t - X_u) + \int_{X_u}^{X_t} f''(K)(X_t - K) \D K \\
={}& f(X_u) + f'(X_u)(X_t - X_u) \\
+&\int_{0}^{X_u} f''(K)(K-X_t)^+ \D K + \int^{\infty}_{X_u} f''(K)(X_t-K)^+ \D K.
\end{align*}
By comparing above equations, we have
\begin{align*}
&\int_{u}^{t} f''(X_{s-}) \D [X^c]_s + 2\sum_{u \leq s \leq t}[\Delta f(X_s) - \Delta X_s f'(X_{s-})]\\
&=  2 \left( \int_{u}^{t} (f'(X_u) - f'(X_s)) \D X_s + \int_{X_u}^{X_t} f''(K)(X_t - K) \D K  \right) \\
&=  2 \left( \int_{u}^{t} (f'(X_u) - f'(X_s)) \D X_s  \right. +\left.  \int_{0}^{X_u} f''(K)(K-X_t)^+ \D K + \int^{\infty}_{X_u} f''(K)(X_t-K)^+ \D K \right).
\end{align*}
Finally, substituting $g$ in place of $f''$, we have the desired result.
\end{proof}

For a continuous process $X$, the above lemma is simply
\begin{align*}
&\int_{u}^{t} g(X_{s-}) \D [X]  \\
={}& 2 \left( \int_{u}^{t} (G(X_u) - G(X_s)) \D X_s \right.
+ \left.  \int_{0}^{X_u} g(K) (K-X_t)^+ \D K + \int_{X_u}^{\infty} g(K) (X_t-K)^+ \D K \right).
\end{align*}
If $X$ is a martingale, the stochastic integral at the right-hand-side vanishes by taking expectation under certain conditions.
The integrands in two Riemann integrals represents the payoff of European put and call options.
In the next, we represent the $\Q$-expectation of a stochastic integral with respect to $[F]$ as a combination of weighted call and put option prices.
There is also an interesting application to derivative pricing using the following lemma, see~\cite{Lee}.

\begin{lemma}\label{Lemmma:representation}
For a continuous function $g$ with its anti-derivative $G$ and second anti-derivative $\bar G$, if $G(F) \in L^2_{\Q, [F]}$, then we have
\begin{align*}
&\E^{\Q} \left[ \int_{0}^{T} g \left( F_{s-} \right) \D \left[F^c\right]_s  + 2\sum_{o < s \leq T} [\bar G(F_s) -\bar G(F_{s-}) - \Delta F_s G(F_{s-})] \right] \\
& \quad = 2\e^{r T} \int_{0}^{\infty} g(K) \phi \left( S_0, K \right) \D K.
\end{align*}
\end{lemma}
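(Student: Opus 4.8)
The plan is to apply the pathwise identity of Lemma~\ref{Lemma:replication2} with $X=F$, $u=0$ and $t=T$, take the $\Q$-expectation of both sides, kill the stochastic-integral term by the martingale property, and finally identify the two remaining Riemann integrals with option prices so that $\phi$ emerges.

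First I would substitute $X=F$, $u=0$, $t=T$ into Lemma~\ref{Lemma:replication2}. The left-hand side becomes exactly the bracketed expression in the statement, namely $\int_0^T g(F_{s-})\,\D[F^c]_s + 2\sum_{0<s\le T}[\bar G(F_s)-\bar G(F_{s-})-\Delta F_s\,G(F_{s-})]$, while the right-hand side reads
\[
2\left(\int_0^T (G(F_0)-G(F_s))\,\D F_s + \int_0^{F_0} g(K)(K-F_T)^+\,\D K + \int_{F_0}^{\infty} g(K)(F_T-K)^+\,\D K\right).
\]
Taking $\E^{\Q}$ of both sides, the first task is to remove the stochastic integral. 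Since $G(F)\in L^2_{\Q,[F]}$ by hypothesis and $G(F_0)$ is $\F_0$-measurable and constant, the integrand $G(F_0)-G(F_s)$ also lies in $L^2_{\Q,[F]}$; because $F$ is a $\Q$-martingale, the integrability criterion cited before the statement guarantees that $\int_0^{\cdot}(G(F_0)-G(F_s))\,\D F_s$ is itself a $\Q$-martingale, so its expectation equals its initial value $0$.

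Next I would interpret the two remaining integrals. After interchanging $\E^{\Q}$ with the $\D K$-integrals, I use $F_T=S_T$ together with the risk-neutral pricing formulas $p(S_0,K)=\e^{-rT}\E^{\Q}[(K-S_T)^+]$ and $c(S_0,K)=\e^{-rT}\E^{\Q}[(S_T-K)^+]$, which give $\E^{\Q}[(K-F_T)^+]=\e^{rT}p(S_0,K)$ and $\E^{\Q}[(F_T-K)^+]=\e^{rT}c(S_0,K)$. The observation that makes the definition of $\phi$ appear is the identification of the integration threshold: since $\e^{-rt}S_t$ is a $\Q$-martingale, $F_0=\E^{\Q}[S_T]=\e^{rT}S_0$, which is precisely the strike at which $\phi$ switches from puts to calls. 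Hence the put integral over $(0,F_0]$ and the call integral over $(F_0,\infty)$ combine into $\e^{rT}\int_0^{\infty} g(K)\phi(S_0,K)\,\D K$, and restoring the factor $2$ yields the claimed formula.

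The main obstacle I expect is the rigorous justification of the two interchanges of limit operations rather than the algebra. For the stochastic integral this means carefully checking the $L^2_{\Q,[F]}$ membership of $G(F_0)-G(F_s)$ and invoking the martingale property; for the $\D K$-integrals it means verifying a Fubini-type bound on $\E^{\Q}\!\int |g(K)|\,(K-F_T)^+\,\D K$ and its call analogue. The cleanest route is to deduce the requisite finiteness directly from the $L^2$ hypothesis, rather than imposing sign conditions on $g$, after which matching $F_0=\e^{rT}S_0$ and collapsing the two integrals into $\phi$ is routine.
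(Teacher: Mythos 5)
Your proposal is exactly the paper's proof, which consists of the single line ``take $\Q$-expectation of Lemma~\ref{Lemma:replication2} with $X=F$'': you apply the same pathwise identity, annihilate the stochastic integral via the $L^2_{\Q,[F]}$ martingale criterion, and recognize $F_0=\e^{rT}S_0$ as the put/call switching strike in the definition of $\phi$. Your version merely makes explicit the justifications (vanishing expectation of the stochastic integral, the Fubini interchange in $K$) that the paper leaves implicit, so it is correct and, if anything, more careful than the original.
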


\begin{proof}
Take $\Q$-expectation to the result of Lemma~\ref{Lemma:replication2} with $X = F$.
\end{proof}

\begin{lemma}\label{Lemma:representation2}
Let $f(x)$ and $g(x)$ are twice continuously differentiable functions, $H$ be a anti-derivative of $f'g'$ and $\bar H$ is a second anti-derivative of $f'g'$.
If $H(F) \in L^2_{\Q, [F]}$, then
\begin{align*}
\E^{\Q}\left[\,[f(F), g(F)]_T \right] ={}& f(F_0)g(F_0) + 2e^{rT}  \int_{0}^{\infty} \frac{\D f}{\D x}(K)\frac{\D g}{\D x}(K) \phi(K) \D K\\
&+ \E^{\Q} \left[ \sum_{0<s\leq T}\{ \Delta f(F_s)\Delta g(F_s) - 2\Delta \bar H(F_s) + 2\Delta F_s H(F_{s-}) \} \right].
\end{align*}
\end{lemma}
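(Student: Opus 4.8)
The plan is to reduce the statement to Lemma~\ref{Lemmma:representation} by rewriting $[f(F),g(F)]_T$ as a continuous integral against $\D[F^c]$ plus a pure-jump sum, and then converting the expectation of the continuous integral into the option-price integral. First I would unwind the definition of the covariation used in the paper. Applying the defining formula $[X,Y]_t = X_tY_t - \int_0^t X_u\,\D Y_u - \int_0^t Y_u\,\D X_u$ with $X=f(F)$, $Y=g(F)$ and comparing with the standard integration-by-parts identity $X_tY_t = X_0Y_0 + \int_0^t X_{u-}\,\D Y_u + \int_0^t Y_{u-}\,\D X_u + [X,Y]^{\mathrm{std}}_t$ shows that the covariation employed here equals $f(F_0)g(F_0)$ plus the standard quadratic covariation of $f(F)$ and $g(F)$. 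This already isolates the boundary term $f(F_0)g(F_0)$ of the claimed formula. Since $f,g$ are twice continuously differentiable, It\^o's lemma gives that the continuous local-martingale part of $f(F)$ is $\int f'(F_{s-})\,\D F^c_s$, so that the standard decomposition into continuous and jump parts reads
$$[f(F),g(F)]^{\mathrm{std}}_T = \int_0^T f'(F_{s-})\,g'(F_{s-})\,\D[F^c]_s + \sum_{0<s\le T}\Delta f(F_s)\,\Delta g(F_s).$$

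With this decomposition in hand I would take the $\Q$-expectation and apply Lemma~\ref{Lemmma:representation} with the continuous function $f'g'$ in place of its $g$; its anti-derivative and second anti-derivative are precisely the $H$ and $\bar H$ of the statement, and the integrability hypothesis $H(F)\in L^2_{\Q,[F]}$ is exactly what licenses the use of that lemma. It rewrites
$$\E^{\Q}\left[\int_0^T f'(F_{s-})\,g'(F_{s-})\,\D[F^c]_s\right] = 2\e^{rT}\int_0^\infty f'(K)\,g'(K)\,\phi(S_0,K)\,\D K - 2\,\E^{\Q}\left[\sum_{0<s\le T}\bigl(\Delta\bar H(F_s) - \Delta F_s\,H(F_{s-})\bigr)\right],$$
which produces the option integral $2\e^{rT}\int_0^\infty (\D f/\D x)(K)(\D g/\D x)(K)\,\phi(K)\,\D K$ appearing in the statement.

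Finally I would combine the three contributions: the boundary term $f(F_0)g(F_0)$, the option integral, and the two jump sums — namely $\sum\Delta f\,\Delta g$ from the covariation's jump part together with $-2\sum(\Delta\bar H - \Delta F_s\,H(F_{s-}))$ from Lemma~\ref{Lemmma:representation}. Collecting all jump contributions under a single expectation yields
$$\E^{\Q}\left[\sum_{0<s\le T}\{\Delta f(F_s)\,\Delta g(F_s) - 2\Delta\bar H(F_s) + 2\Delta F_s\,H(F_{s-})\}\right],$$
which completes the claimed identity. The main obstacle I anticipate is the bookkeeping forced by the paper's nonstandard covariation convention: one must track the $X_u$ versus $X_{u-}$ distinction in the defining integrals so that the jump cross-terms $\sum\Delta f\,\Delta g$ are neither double-counted nor dropped, and one must verify that the stated $L^2$ condition is precisely strong enough to make the stochastic integral vanish in expectation when Lemma~\ref{Lemmma:representation} is invoked. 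Everything else is a routine reassembly of the continuous and jump pieces.
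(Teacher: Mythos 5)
Your proposal is correct and follows essentially the same route as the paper: both reduce $[f(F),g(F)]_T$ to the boundary term $f(F_0)g(F_0)$, the continuous part $\int_0^T f'(F_{s-})\,g'(F_{s-})\,\D [F^c]_s$, and the jump sum $\sum_{0<s\leq T}\Delta f(F_s)\Delta g(F_s)$, and then apply Lemma~\ref{Lemmma:representation} with $f'g'$ in the role of its integrand (so that $H$ and $\bar H$ are its anti-derivatives), under the stated $L^2$ condition. The only cosmetic difference is that you extract the decomposition by relating the paper's covariation convention to the standard one via integration by parts, whereas the paper obtains it by applying It\^{o}'s formula to $f(F)$ and $g(F)$ separately and computing the covariations of their continuous and jump components directly.
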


\begin{proof}
By It\'{o}'s formula,
$$f(F_t) = f(F_u) + \int_{u}^{t} f'(F_{s-}) \D F^c_s + \frac{1}{2}\int_{u}^{t} f''(F_{s-}) \D [F^c]_s + \sum_{u < s \leq t}\Delta f(F_s), $$
$$g(F_t) = g(F_u) + \int_{u}^{t} g'(F_{s-}) \D F^c_s + \frac{1}{2}\int_{u}^{t} g''(F_{s-}) \D [F^c]_s + \sum_{u < s \leq t}\Delta g(X_s). $$
First,
$$\left[ \int_{u}^{t} f'(F_{s-}) \D F^c_s, \int_{u}^{t} g'(F_{s-}) \D F^c_s \right] = \int_{u}^{t} f'(F_{s-}) g'(F_{s-}) \D [F^c]_s .$$
Second,
$$\left[ \sum_{u < s \leq t}\Delta f(F_s), \sum_{u < s \leq t}\Delta g(X_s) \right] = \sum_{u < s \leq t} (\Delta f(F_s)\Delta g(X_s)).$$
Note that the quadratic covariation between pure jump process and continuous process is zero.
Therefore,
$$ [f(F), g(F)]_t =  f(F_0)g(F_0) + \int_{0}^{t} \frac{\D f}{\D x}(F_{s-})\frac{\D g}{\D x}(F_{s-}) \D [F^c]_s + \sum_{0<s\leq t}(\Delta f(F_s) \Delta g(F_s)).$$
Applying Lemma~\ref{Lemmma:representation}, we have
\begin{align*}
\E [f(F), g(F)]_t ={}& f(F_0)g(F_0) + \E \left[ \int_{0}^{t} \frac{\D f}{\D x}(F_{s-})\frac{\D g}{\D x}(F_{s-}) \D [F^c]_s  + \sum_{0<s\leq t}(\Delta f(F_s) \Delta g(F_s))\right] \\
={}& f(F_0)g(F_0) + 2e^{rT}  \int_{0}^{\infty} \frac{\D f}{\D x}(K)\frac{\D g}{\D x}(K) \phi(K) \D K \\
&+ \E \left[ \sum_{0<s\leq t}( - 2 \Delta \bar H(F_s)  + 2\Delta F_s H(F_{s-}) +  \Delta f(F_s) \Delta g(F_s))\right],
\end{align*}
and this complete the proof.
\end{proof}

\begin{remark}
The above result also holds when $f$ and $g$ are functions of $t$, since the covariation between $t$, finite variation and continuous, and other stochastic process is zero.
In other words, the lemma is also applicable for $[f(t,F), g(t,F)]_t$.
Since we have the same result, for the notational simplicity, in the next, we use $f(F)$ instead of $f(t, F)$.
\end{remark}

Now we start to prove the Theorem~\ref{Thm:representation}.
For $\E^{\Q}\left[\, [R]_T \right]$, it is well known (see \cite{CarrWu}).
For the option parts for $[R, R^2]$ and $[R^2]$, use the fact that $R_t = \log F_t/F_0 - rt$ and apply Lemma~\ref{Lemma:representation2}.
In the case of $[R, R^2]_T$, $f(x) = \log(x/F_0) + rt$ and $g(x) = \log^2(x/F_0) + 2rt\log(x/F_0) + r^2 t^2$.
In addition,
$ f'(x) = 1/x $ and $ g'(x) = 2/K \log (x/S_0)$.
(More precisely, $g$ is a function of $t$ and $x$ and we use $g'(T,x) = 2/K \log (x/S_0)$.)
Thus,
$$ \frac{\D f}{\D x}(K)\frac{\D g}{\D x}(K) = \frac{2}{K^2} \log \frac{K}{S_0} .$$
Similarly for $[R^2]_T$, $f(x) = g(x) = \log^2(x/F_0) + 2rt\log(x/F_0) + r^2 t^2$.
In this case, we have
$$ \frac{\D f}{\D x}(K)\frac{\D g}{\D x}(K) = \frac{4}{K^2} \left(\log \frac{K}{S_0} \right)^2.$$

For the jump correction term $J_3$ of $[R, R^2]$, we have
\begin{align*}
\E^{\Q} \left[ \sum_{0<s\leq T} \Delta f(F_s) \Delta g(F_s) \right] &= \E^{\Q} \left[ \sum_{0<s\leq T} \Delta R_s \Delta \left( R^2_s \right) \right]= \E^{\Q} \left[ \sum_{0<s\leq T} \Delta R_s  \left( R^2_s - R^2_{s-} \right) \right]\\
&= \E^{\Q} \left[ \sum_{0<s\leq T} \Delta R_s  \left( R_s - R_{s-} \right)(R_s + R_{s-}) \right]\\
&= \E^{\Q} \left[ \sum_{0<s\leq T} (\Delta R_s)^2 (\Delta R_s + R_{s-} + R_{s-}) \right]\\
&= \E^{\Q}\left[ \int_{[0,T]\times \mathbb R}\left( 2x^2 R_{s-} + x^3 \right)J(\D s \times \D x) \right]
\end{align*}
where $J$ is the jump measure for the log-return process $R$.
In addition, the antiderivative and second antiderivative of $f'g'$ are
$$H(F) = -2\frac{\log(F/F_0) + rt + 1}{F}, \quad \bar H(F) = -\log^2(F/F_0) - 2(1+rt)\log F.$$
Note that
\begin{align*}
H(F_{s-})\Delta F_s &= -2 \frac{R_{s-} + 1}{F_{s-}}\Delta F_s = -2(R_{s-} + 1)\left(\frac{F_s}{F_{s-}}-1\right)\\
&=-2(R_{s-}+1)(\e^{\Delta R_s}-1)
\end{align*}
and
\begin{align*}
\Delta \bar H(F_s) &= - \left\{ \log^2 \frac{F_s}{F_0} - \log^2 \frac{F_{s-}}{F_0} + 2(1+rs) \left(\log \frac{F_s}{F_0} - \log\frac{F_{s-}}{F_0}\right) \right\} \\
&= - \left\{ \left(\log\frac{F_{s}}{F_0} - \log\frac{F_{s-}}{F_0} \right) \left( 2\log\frac{F_{s-}}{F_0} + \log\frac{F_s}{F_0} - \log\frac{F_{s-}}{F_0} \right) + 2(1+rs) \left(\log \frac{F_s}{F_{s-}} \right) \right\} \\
&= -\Delta R_s \left( 2(R_{s-} -rs) + \Delta R_s\right) - 2(1+rs) \Delta R_s \\
&= -\Delta R_s (2R_{s-} + \Delta R_s + 2 ).
\end{align*}
Thus,
$$\E^{\Q}\left[ \sum_{0<s\leq T} 2H(F_{s-})\Delta F_s \right] = - \E^{\Q} \left[ \int_{[0,T]\times \mathbb R} 4(\e^x-1)\left(R_{s-} + 1 \right)J(\D s \times \D x) \right]$$
and
$$\E^{\Q}\left[ \sum_{0<s\leq T} 2\Delta \bar H(F_{s}) \right] = - \E^{\Q} \left[ \int_{[0,T]\times \mathbb R} 2\left( 2x R_{s-} + x^2 + 2x \right)J(\D s \times \D x) \right].$$
Therefore,
\begin{align*}
&\E^{\Q} \left[ \sum_{0<s\leq T}\{\Delta f(F_s)\Delta g(F_s) - 2\Delta \bar H(F_s) + 2H(F_{s-})\Delta F_s \} \right] \\
&= 4 \E^{\Q}  \left[ \int_{[0,T]\times \mathbb R} \left( R_{s-} \left( 1 + x + \frac{1}{2}x^2 - \e^x \right) + 1 + x + \frac{1}{2}x^2+\frac{1}{4}x^3 - \e^x  \right) J(\D s \times \D x) \right].
\end{align*}

For the jump correction term $J_4$ of $[R^2]$, let
$ f(F) = g(F) = \log^2(F/F_0) + 2rt\log(F/F_0) + r^2t^2$.
Then
$$ f'(F)g'(F) = \frac{4}{F^2} \left( \log\frac{F}{F_0} + rt \right)^2$$
and
$$ H(F) = -4\frac{ \log^2(F/F_0) + 2(rt+1)\log(F/F_0) + r^2 t^2 + 2rt + 2}{F} $$
and
$$ \bar H(F) = -\frac{4}{3}\log^2\frac{F}{F_0}\left(\log\frac{F}{B_0} + 3rt +3 \right) - 4\log(F)(r^2t^2+2rt+2).$$
In addition,
\begin{align*}
\Delta f(F_s) \Delta g(F_s) &= (R^2_s - R^2_{s-})^2 = (R_s - R_{s-})^2(R_s+R_{s-})^2 = (\Delta R_s)^2 (2R_{s-} + \Delta R_s)^2 \\
&= (\Delta R_s)^2(4R^2_{s-} + 4\Delta R_s R_{s-} + (\Delta R_s)^2 )
\end{align*}
and
\begin{align*}
H(F_{s-})\Delta F_s &= -4\left\{ \left( \log\frac{F_{s-}}{F_0} + rt \right)^2 + 2\left(\log\frac{F_{s-}}{F_0} + rt \right) + 2\right\}\left( \frac{F_s}{F_{s-}} - 1 \right) \\
&= -4(R^2_{s-} + 2R_{s-} + 2)(\e^{\Delta R} -1).
\end{align*}

To compute $\Delta \bar H(F_s)$, we need
\begin{align*}
\log^3 \frac{F_s}{F_0} - \log^3 \frac{F_s}{F_0} &= \left(\log \frac{F_{s}}{F_{0}} - \log \frac{F_{s-}}{F_{0}}\right)\left(\log^2\frac{F_s}{F_0} + \log^2\frac{F_{s-}}{F_0} + \log\frac{F_s}{F_0}\log\frac{F_{s-}}{F_0}\right)\\
&= \Delta R_s( (\Delta R_s + R_{s-}-rs)^2 + (R_{s-}-rs)^2 + (\Delta R_s + R_{s-}-rs)(R_{s-}-rs) )\\
&= \Delta R_s( (\Delta R_s)^2 + 3R^2_{s-} + 3r^2s^2 + 3\Delta R_s R_{s-} - 3rs\Delta R_s - 6rsR_{s-})
\end{align*}
and
$$
(rs+1)\left( \log^2\frac{F_s}{F_0} - \log^2\frac{F_{s-}}{F_0} \right) = \Delta R_s(2rsR_{s-} - 2r^2s^2 + rs\Delta R_s + 2R_{s-} - 2rs + \Delta R_s)
$$
and
$$(r^2s^2 + 2rs+2)(\log F_s - \log F_{s-}) = \Delta R_s (r^2s^2 + 2rs +2).$$
Using above equations, we have
$$ \Delta \bar H(F_s) = -4\Delta R_s \left( \frac{(\Delta R_s)^2}{3} + R_{s-}\Delta R_s + \Delta R_s + R_{s-}^2 + 2R_{s-} +2 \right).$$
Finally,
\begin{align*}
&\E^{\Q} \left[ \sum_{0<s\leq T} \{ \Delta f(F_s)\Delta g(F_s) - 2\Delta \bar H(F_s) + 2H(F_{s-})\Delta F_s \} \right]\\
={}& 8 \E^{\Q} \left[ \int_{[0,T]\times \mathbb R} \left( R^2_{s-}\left( 1 + x + \frac{1}{2}x^2 - \e^x \right) + 2R_{s-}\left(1 + x + \frac{1}{2}x^2 + \frac{1}{4}x^3 - \e^x \right) \right. \right. \\
&\left. + 2\left(  1 + x + \frac{1}{2}x^2 +  \frac{1}{6}x^3 + \frac{1}{16}x^4  -\e^x \right) J(\D s \times \D x) \right].
\end{align*} 


\begin{thebibliography}{}

\bibitem[Andersen et~al., 2003]{ABDL}
Andersen, T.~G., Bollerslev, T., Diebold, F.~X., \& Labys, P. (2003).
\newblock Modeling and forecasting realized volatility.
\newblock {\em Econometrica}, 71, 579--625.

\bibitem[\'{A}ngel Le\'{o}n et~al., 2005]{LeonRubioSerna}
\'{A}ngel Le\'{o}n, Rubio, G., \& Serna, G. (2005).
\newblock Autoregresive conditional volatility, skewness and kurtosis.
\newblock {\em The Quarterly Review of Economics and Finance}, 45, 599 -- 618.

\bibitem[Bakshi et~al., 2003]{BakshiKapadiaMadan}
Bakshi, G., Kapadia, N., \& Madan, D. (2003).
\newblock Stock return characteristics, skew laws, and the differential pricing
  of individual equity options.
\newblock {\em Review of Financial Studies}, 16, 101--143.

\bibitem[Bakshi and Madan, 2006]{BakshiMadan}
Bakshi, G. \& Madan, D. (2006).
\newblock A theory of volatility spreads.
\newblock {\em Management Science}, 52, 1945--1956.

\bibitem[Barndorff-Nielsen and Shephard, 2002]{Barndorff2002a}
Barndorff-Nielsen, O.~E. \& Shephard, N. (2002).
\newblock Econometric analysis of realized volatility and its use in estimating
  stochastic volatility models.
\newblock {\em Journal of the Royal Statistical Society: Series B (Statistical
  Methodology)}, 64, 253--280.

\bibitem[Barndorff-Nielsen and Shephard, 2004]{Barndorff-Nielsen2004}
Barndorff-Nielsen, O.~E. \& Shephard, N. (2004).
\newblock Power and bipower variation with stochastic volatility and jumps.
\newblock {\em Journal of Financial Econometrics}, 2, 1--37.

\bibitem[Barndorff-Nielsen and Shephard, 2006]{Barndorff-Nielsen2006}
Barndorff-Nielsen, O.~E. \& Shephard, N. (2006).
\newblock Econometrics of testing for jumps in financial economics using
  bipower variation.
\newblock {\em Journal of Financial Econometrics}, 4, 1--30.

\bibitem[Britten-Jones and Neuberger, 2000]{Britten-Jones}
Britten-Jones, M. \& Neuberger, A. (2000).
\newblock Option prices, implied price processes, and stochastic volatility.
\newblock {\em The Journal of Finance}, 55, 839--866.

\bibitem[Brooks et~al., 2005]{Brooks}
Brooks, C., Burke, S.~P., Heravi, S., \& Persand, G. (2005).
\newblock Autoregressive conditional kurtosis.
\newblock {\em Journal of Financial Econometrics}, 3, 399--421.

\bibitem[Carr and Madan, 2001]{CarrMadan}
Carr, P. \& Madan, D. (2001).
\newblock Towards a theory of volatility trading.
\newblock In {\em Handbooks in Mathematical Finance: Option Pricing, Interest
  Rates and Risk Management}, 458--476. Cambridge University Press.

\bibitem[Carr and Wu, 2009]{CarrWu}
Carr, P. \& Wu, L. (2009).
\newblock Variance risk premiums.
\newblock {\em Review of Financial Studies}, 22, 1311--1341.

\bibitem[Christoffersen et~al., 2006]{Christoffersen}
Christoffersen, P., Heston, S., \& Jacobs, K. (2006).
\newblock Option valuation with conditional skewness.
\newblock {\em Journal of Econometrics}, 131, 253 -- 284.

\bibitem[Demeterfi et~al., 1999]{DDKZ}
Demeterfi, K., Derman, E., Kamal, M., \& Zou, J. (1999).
\newblock More than you ever wanted to know about volatility swaps.
\newblock Technical report, Goldman Sachs.

\bibitem[Du and Kapadia, 2012]{DuKapadia}
Du, J. \& Kapadia, N. (2012).
\newblock Volatility and tail indices from option prices.
\newblock {\em Working paper, University of Massachusetts, Amherst}.

\bibitem[Hansen and Lunde, 2006]{Hansen}
Hansen, P.~R. \& Lunde, A. (2006).
\newblock Realized variance and market microstructure noise.
\newblock {\em Journal of Business and Economic Statistics}, 24, 127--161.

\bibitem[Harvey and Siddique, 2000]{HarveySiddique}
Harvey, C. \& Siddique, A. (2000).
\newblock Conditional skewness in asset pricing tests.
\newblock {\em Journal of Finance}, 55, 1263--1295.

\bibitem[Harvey and Siddique, 1999]{HarveySiddique1999}
Harvey, C.~R. \& Siddique, A. (1999).
\newblock Autoregressive conditional skewness.
\newblock {\em The Journal of Financial and Quantitative Analysis},
  34, 465--487.

\bibitem[Kozhan et~al., 2012]{Kozhan}
Kozhan, R., Neuberger, A., \& Schneider, P. (2012).
\newblock The skew risk premium in the equity index market.
\newblock {\em Working paper, Warwick Business School}.

\bibitem[Kraus and Litzenberger, 1976]{KrausLitzenberger}
Kraus, A. \& Litzenberger, R.~H. (1976).
\newblock Skewness preference and the valuation of risk assets.
\newblock {\em The Journal of Finance}, 31, 1085--1100.

\bibitem[Kuo, 2006]{Kuo}
Kuo, H.-H. (2006).
\newblock {\em Introduction to Stochastic Integration}.
\newblock New York: Springer.

\bibitem[Lee, 2012a]{Lee2}
Lee, K. (2012a).
\newblock {Unpublished doctoral dissertation, \em GARCH intensity model and new methods of option pricing}.
\newblock KAIST, Daejon.

\bibitem[Lee, 2012b]{Lee}
Lee, K. (2012b).
\newblock Recursive formula for arithmetic asian option prices.
\newblock {\em Journal of Futures Markets}, 
  http://dx.doi.org/10.1002/fut.21591.

\bibitem[Mykland and Zhang, 2009]{MyklandZhang}
Mykland, P.~A. \& Zhang, L. (2009).
\newblock Inference for continuous semimartingales observed at high frequency.
\newblock {\em Econometrica}, 77, 1403--1445.

\bibitem[Neuberger, 2012]{Neuberger2012}
Neuberger, A. (2012).
\newblock Realized skewness.
\newblock {\em Review of Financial Studies}, 25, 3423--3455.

\bibitem[Protter, 2005]{Protter}
Protter, P.~E. (2005).
\newblock {\em Stochastic integration and differential equations}.
\newblock Springer.

\bibitem[Todorov, 2010]{Todorov}
Todorov, V. (2010).
\newblock Variance risk-premium dynamics: The role of jumps.
\newblock {\em Review of Financial Studies}, 23, 345--383.

\bibitem[Zhang, 2012]{Zhang}
Zhang, L. (2012).
\newblock Implied and realized volatility: empirical model selection.
\newblock {\em Annals of Finance}, 8, 259--275.

\end{thebibliography}
\end{document}